\documentclass[letterpaper, 10 pt, conference]{ieeeconf}%
\usepackage{graphicx}
\usepackage{epsfig}
\usepackage{amsmath}
\usepackage{amssymb}
\usepackage{bbm}
\usepackage{float}
\usepackage{tikz}
\usepackage{algorithm}
\usepackage{algpseudocode}
\usepackage{pgfplots}
\usepackage{gnuplottex}
\usepackage{mathrsfs}
\usepackage{subfigure}
\usepackage{booktabs}
\usepackage{multirow}
\usepackage{amsfonts}%
\setcounter{MaxMatrixCols}{30}
\providecommand{\U}[1]{\protect\rule{.1in}{.1in}}
\providecommand{\U}[1]{\protect\rule{.1in}{.1in}}
\IEEEoverridecommandlockouts
\newcommand\Square[1]{+(-#1,-#1) rectangle +(#1,#1)}
\makeatletter
\let\@ORGmakecaption\@makecaption
\long
\def\@makecaption#1#2{\@ORGmakecaption{#1}{#2}\vskip\belowcaptionskip\relax}
\makeatother
\setlength\belowcaptionskip{-5pt}

\setlength\abovedisplayskip{1mm}
\setlength\belowdisplayskip{1.5mm}
\setlength\abovedisplayshortskip{0pt}
\setlength\belowdisplayshortskip{0pt}
\setlength{\textfloatsep}{5pt}
\def\@normalsize{\@setsize\normalsize{10pt}\xpt\@xpt
\abovedisplayskip 10pt plus2pt minus5pt\belowdisplayskip
\abovedisplayskip \abovedisplayshortskip \z@
plus3pt\belowdisplayshortskip 6pt plus3pt
minus3pt\let\@listi\@listI}
\setlength{\textwidth}{7in}
\setlength{\parskip}{-1pt}
\setlength{\headsep}{-3pt}
\allowdisplaybreaks
\makeatletter
\newcommand\npar{\@startsection{section}{1pt}}
\makeatother
\DeclareMathOperator*{\argmin}{\arg\!\min}

\newcommand{\bsym}[1]{\boldsymbol{#1}}
\newtheorem{theorem}{Theorem}
\newcounter{lemcount}
\newenvironment{lemma}[1][]{\refstepcounter{lemcount}\par  \it{Lemma~\thelemcount. #1 }
\rmfamily  }{\par}

\newcommand{\qed}{\nobreak \ifvmode \relax \else
\ifdim\lastskip<1.5em \hskip-\lastskip
\hskip1.5em plus0em minus0.5em \fi \nobreak
\vrule height0.4em width0.5em depth0.25em\fi}
\title{{\LARGE \textbf{A New Event-Driven Cooperative Receding Horizon Controller for
Multi-agent Systems in Uncertain Environments}}}
\author{ \parbox{3.5 in}{\centering Yasaman Khazaeni and Christos G. Cassandras\\
         Division of Systems Engineering\\ and Center for Information and Systems Engineering\\
         Boston University\\
         Brookline, MA 02446\\
         {\tt\small yas@bu.edu,cgc@bu.edu}}
         \thanks{The authors' work is supported in part by NSF under Grant CNS-1139021, by AFOSR under grant FA9550-12-1-0113, by ONR under grant N00014-09-1-1051, and by ARO under Grant W911NF-11-1-0227.}
}

\begin{document}
\maketitle
\thispagestyle{empty}
\pagestyle{empty}
\begin{abstract}
In previous work, a Cooperative Receding Horizon (CRH) controller was developed for solving cooperative multi-agent problems in uncertain environments. In this paper, we overcome several limitations of this controller, including potential instabilities in the agent trajectories and poor performance due to inaccurate estimation of a reward-to-go function. We propose an event-driven CRH controller to solve the maximum reward collection problem (MRCP) where multiple agents cooperate to maximize the total reward collected from a set of stationary targets in a given mission space. Rewards are non-increasing functions of time and the environment is uncertain with new targets detected by agents at random time instants. The controller sequentially solves optimization problems over a planning horizon and executes the control for a shorter action horizon, where both are defined by certain events associated with new information becoming available. In contrast to the earlier CRH controller, we reduce the originally infinite-dimensional feasible control set to a finite set at each time step. We prove some properties of this new controller and include simulation results showing its improved performance compared to the original one.
\end{abstract}

\section{Introduction}

Cooperative control is used in systems where a set of control agents with
limited sensing, communication and computational capabilities seeks to achieve
objectives defined globally or individually \cite{Shamma2007},\cite{Murphey2002}.
Uncertain environments further require the agents to respond to random events.
Examples arise in UAV teams, cooperative classification, mobile agent
coordination, rendez-vous problems, task assignment, persistent monitoring, coverage control
and consensus problems; see \cite{Jadbabaie2003},\cite{McLain2001-p},\cite{CGC10},\cite{CGC2013_2},\cite{Cortes2004},\cite{Zhong2011},\cite{Panagou2014},\cite{Ren2008},\cite{Zhong2010}
and references therein. Both centralized and decentralized control approaches
are used; in the latter case, communication between the agents in order to
make collaborative decisions plays a crucial role.

In this paper, we consider Maximum Reward Collection Problems (MRCP) where $N$
agents are collecting time-dependent rewards associated with $M$ targets in an
uncertain environment. In a deterministic setting with equal target rewards a
one-agent MRCP is an instance of a Traveling Salesman Problem (TSP),
\cite{Salz1966},\cite{applegate2011}. The multi-agent MRCP is similar to the
Vehicle Routing Problem (VRP) \cite{Laporte1992}. These are combinatorial problems
for which globallly optimal solutions are found through integer programming.
For example, in \cite{Ekici2013},\cite{Tang2007}, a deterministic MRCP with a
linearly decreasing reward model is cast as a dynamic scheduling problem and
solved via heuristics.

Because of the MRCP complexity, it is natural to resort to decomposition
techniques. One approach is to seek a functional decomposition that divides
the problem into smaller sub-problems \cite{Bellingham02},\cite{Earl07_2} which may
be defined at different levels of the system dynamics. An alternative is a
time decomposition where the main idea is to solve a finite horizon
optimization problem, then continuously extend this \emph{planning horizon}
forward (either periodically or in event-driven fashion). This is in the
spirit of receding horizon techniques used in Model Predictive Control (MPC)
to solve optimal control problems for which obtaining infinite horizon
feedback control solutions is extremely difficult \cite{Mayne2000}. In such
methods, the current control action is calculated by solving a finite horizon
open-loop optimal control problem using the current state of the system as the
initial state. At each instant, the optimization yields an optimal control
sequence executed over a shorter \emph{action horizon} before the process is
repeated. In the context of multi-agent systems, a Cooperative Receding
Horizon (CRH) controller was introduced in \cite{Li2006} with the controller
steps defined in event-driven fashion (with events dependent on the observed
system state) as opposed to being invoked periodically, in time-driven
fashion. A key feature of this controller is that it does not attempt to make
any explicit agent-to-target assignments, but only to determine headings that,
at the end of the current planning horizon, would place agents at positions
such that a total expected reward is maximized. Nonetheless, as shown in
\cite{Li2006}, a stationary\ trajectory for each agent is guaranteed under
certain conditions, in the sense that an agent trajectory always converges to
some target in finite time.

In this paper, we consider MRCPs in uncertain environments where, for
instance, targets appear/disappear at random times and a target may have a
random initial reward and a random reward decreasing rate. The contribution is
to introduce a new CRH controller, allowing us to overcome several limitations
of the controller in \cite{Li2006}, including potential instabilities in the
agent trajectories and poor performance due to inaccurate estimation of the
reward-to-go function. We accomplish this by reducing, at each event-driven
control evaluation step, the originally infinite-dimensional feasible control
set to a finite set and by improving the estimation process for the reward to
go, including a new \textquotedblleft travel cost factor\textquotedblright%
\ for each target which accommodates different target configurations in a
mission space. We also establish some properties of this new controller whose
overall performance is significantly better relative to the original one, as
illustrated through various simulation examples.

In section \ref{ProbForm}, the MRCP is formulated and in Section \ref{MDP} we
place the problem in a broader context of event-driven optimal control. In
Sections \ref{origCRH} and \ref{NewCRH} the original CRH controller is
reviewed and the proposed new controller and some of its properties are
established. In Section \ref{Numerical} simulation examples are presented and
future research directions are outlined in the conclusions.
\section{Problem Formulation}

\label{ProbForm} We consider a MRCP where agents and targets are located in a
mission space $S$. There are $M$ targets defining a set $\mathcal{T}=\{1,..,M\}$
and $N$ agents defining a set $\mathcal{A}=\{1,...,N\}$. The mission space may
have different topological characteristics. In a Euclidean topology,
$S\subset\mathbb{R}^{2}$ as illustrated in Fig. \ref{ObstacleMission} with a
triangle denoting a base, circles are agents and squares are targets. In this
case, the distance metric $d(x,y)$ is a simple Euclidean norm such that
$\mathit{d}:S\times S\rightarrow\mathbb{R}$ is the length of the shortest path
between points $x,y\in S$. Moreover, the feasible agent headings are given by
the set $\mathbf{U}_{j}(t)=[0,2\pi],$ $j\in\mathcal{A}$. If there are
obstacles in $S$, the feasible headings and the shortest path between two
points should be defined accordingly. Alternatively, the mission space may be
modeled as a graph $\mathcal{G}(E,V)$ with $V$ representing the location of
targets and the base. Feasible headings are defined by the (directed) edges at
each node and the distance $d(u,v)$ is the sum of the edge weights on the
shortest path between $u$ and $v$. In this paper, we limit ourselves to a
Euclidean mission topology.

Targets are located at points $\mathbf{y}_{i}\in S$, $i\in\mathcal{T}$. Target
$i$'s reward is denoted by $\lambda_{i}\phi_{i}(t)$ where $\lambda_{i}$ is the
initial maximum reward and $\phi_{i}(t)\in\lbrack0,1]$ is a non-increasing
discount function. By using the appropriate discounting function we can
incorporate constraints such as hard or soft deadlines for targets. An example
of a discount function is
\begin{equation}
\phi_{i}(t)=\left\{
\begin{array}
[c]{ll}%
1-\frac{\alpha_{i}}{D_{i}}t & \mbox{  if $t \le D_i$}\\
(1-\alpha_{i})e^{-\beta_{i}(t-D_{i})} & \mbox{   if $t>D_i$}
\end{array}
\right.  \label{reward}%
\end{equation}
where $\alpha_{i}$, $\beta_{i}$ and $D_{i}$ are given parameters. Agents are
located at $\mathbf{x}_{j}(t)\in S$. Each agent has a controllable heading at
time $t$, $u_{j}(t)\in\mathbf{U}_{j}(t)=[0~2\pi]$. The velocity of agent $j$
is
\begin{equation}
\mathbf{v}_{j}(t)=V_{j}%
\begin{bmatrix}
\cos(u_{j}(t))\\
\sin(u_{j}(t))
\end{bmatrix}
\label{velocity}%
\end{equation}
where we assume that $V_{j}$ is a fixed speed.

We define a \emph{mission} as the process of the $N$ agents cooperatively
collecting the maximum possible total reward from $M$ targets within a given
mission time $T$. Upon collecting rewards from all targets, the agents deliver
it to a \emph{base} located at $\mathbf{z}\in S$. Events occurring during a
mission can be controllable (e.g., collecting a target's reward) or random
(e.g., the appearance/disappearance of targets or changes in their location).
The event-driven CRH controller we will develop, handles these random events by re-solving the optimal control problem as in the original CRH controller in
\cite{Li2006}. In order to ensure that agents collect target rewards in finite
time, we assume that each target has a radius $s_{i}>0$ and that agent $j$
collects reward $i$ at time $t$ if and only if $\mathit{d}(\mathbf{x}%
_{j}(t),\mathbf{y}_{i})\leq s_{i}$. \begin{figure}[ptb]
\centering
\begin{tikzpicture}[scale=0.4]
\draw[thin,gray,dotted,step=.5] (0,0) grid (7.5,7.5);
\draw[black,thick](3,3) circle (0.4);
\draw[black,thick](6.5,3) circle (0.4);
\draw[black](3,3) node{\scriptsize ${\bf 1}$};
\draw[black](6.5,3) node{\scriptsize ${\bf 2}$};
\color{blue}
\draw(5,4) \Square{12pt};
\draw(6,1) \Square{12pt};
\draw(3,1.5) \Square{12pt};
\draw(2.1,6) \Square{12pt};
\draw(1,5.3) \Square{12pt};
\draw[blue](5,4) node{\scriptsize ${\bf 1}$};
\draw[blue](6,1) node{\scriptsize ${\bf 2}$};
\draw[blue](3,1.5) node{\scriptsize ${\bf 3}$};
\draw[blue](2.1,6) node{\scriptsize ${\bf 4}$};
\draw[blue](1,5.3) node{\scriptsize ${\bf 5}$};
\fill[red](3.5,3.5) -- (3.75,4) -- (4,3.5) -- cycle;
\draw[red](3.75,3.3) node{\scriptsize B};
\fill[gray] (2,0).. controls (3,3) and (2,4) .. (0,3);
\fill[gray] (3,6).. controls (6,6.3) and (7,6) .. (4,7.1);
\end{tikzpicture}
\caption{Sample mission space with filled blue regions as obstacles}%
\label{ObstacleMission}%
\end{figure}

\section{An Event-Driven Optimization View}

\label{MDP} We view the solution of a MRCP as a sequence of headings for all
agents and associated heading switching times. We define a policy
${\boldsymbol{\pi}}$ as a vector $[\mathbf{u},{\xi}]$ where ${\xi}=[\xi
_{1},...,\xi_{K}]$ are the switching time intervals over which headings are
maintained with $t_{k+1}=\sum_{l=1}^{k}\xi_{l}$, and $t_{1}=0$. The control
$\mathbf{u}=[\mathbf{u}_{1},...,\mathbf{u}_{K}]$ with $\mathbf{u}_{k}%
=[u_{1}(t_{k}),...,u_{N}(t_{k})]$ is the vector of all the agent headings at
time $t_{k}$. With $M$ bounded, there exist policies ${\boldsymbol{\pi}}$ such
that all targets are visited over a finite number of switching events. Each
switching time $t_{k}$ is either the result of a controllable event (e.g.,
visiting a target) or an uncontrollable random event. This is a complex
stochastic control problem where the state space $\Xi$ is the set of
all possible location of agents $\mathcal{X}_{k}=[\mathbf{x}_{1}%
(t_{k}),...,\mathbf{x}_{N}(t_{k})]$ and targets $\mathcal{Y}_{k}%
=[\mathbf{y}_{1},...,\mathbf{y}_{M_{k}}]$ with $M_{k}=\Vert\mathcal{T}%
_{k}\Vert$ and $\mathcal{T}_{k}$ is the set of unvisited targets at time
$t_{k}$. As the mission evolves, $M_{k}$ decreases and the mission is complete
when either $M_{k}=0$ or a given mission time $T$ is reached. The complete
system state at time $t_{k}$ is $(\mathcal{X}_{k},\mathcal{Y}_{k}%
)\in\Xi$. We define the optimization problem $\mathbf{P}$ as:
\begin{equation}
\max\limits_{{\boldsymbol{\pi}}}\sum_{k=1}^{K}R_{\boldsymbol{\pi}}%
(t_{k},\mathcal{X}_{k},\mathcal{Y}_{k}) \label{ProblemP}%
\end{equation}
where
\[
R_{\boldsymbol{\pi}}(t_{k},\mathcal{X}_{k},\mathcal{Y}_{k})=\sum_{i=1}^{M_{k}%
}\sum_{j=1}^{N}\lambda_{i}\phi_{i}(t_{k})\mathbbm{1}\{\mathit{d}%
(\mathbf{x}_{j}(t_{k}),\mathbf{y}_{i})\leq s_{i}\}
\]
The time a target is visited is a controllable event associated with a heading
switching. In a deterministic problem, there is no need to switch headings
unless a target is visited, but in an uncertain setting the switching times
are not limited to these events. We define a subsequence ${\boldsymbol{\tau}%
}^{\boldsymbol{\pi}}=\{\tau_{1}^{\boldsymbol{\pi}},\tau_{2}^{\boldsymbol{\pi}%
},...,\tau_{M}^{\boldsymbol{\pi}}\}$ of $\{t_{1},...t_{K}\}$, $M\leq K,$ so
that $\tau_{i}^{\boldsymbol{\pi}}$ is the time target $i$ is visited. Note
that $\boldsymbol{\tau}^{\boldsymbol{\pi}}$ is not a monotonic sequence, since
targets can be visited in any order.
Therefore, \eqref{ProblemP} can be rewritten as
\begin{equation}
\max\limits_{\boldsymbol{\pi}}\sum_{i=1}^{M}\lambda_{i}\phi_{i}(\tau
_{i}^{\boldsymbol{\pi}}) \label{tauformulation}%
\end{equation}
Defining the immediate reward as being collected during a time period $\xi
_{k}$ and the reward-to-go as being aggregated over all $t>t_{k}+\xi_{k}$, an
optimality equation for this problem is:
\begin{equation}
\resizebox{1.0 \columnwidth}{!}{$J(t_k,\mathcal{X}_k,\mathcal{Y}_k)=\max\limits_{{\bf u}_k,\xi_k}\big[J_I(t_k,\mathcal{X}_k,\mathcal{Y}_k,{\bf u}_k,\xi_k)+J(t_{k+1},\mathcal{X}_{k+1},\mathcal{Y}_{k+1})\big]$}
\label{optequation}%
\end{equation}
where $J(t_{k},\mathcal{X}_{k},\mathcal{Y}_{k})$ denotes the maximum total
reward at time $t_{k}$ with current state $(\mathcal{X}_{k},\mathcal{Y}_{k})$
and $J_{I}(t_{k},\mathcal{X}_{k},\mathcal{Y}_{k},\mathbf{u}_{k},\xi_{k})$ is
the immediate reward collected in the interval $(t_{k},t_{k+1}]$. Finally,
$J(t_{k+1},\mathcal{X}_{k+1},\mathcal{Y}_{k+1})$ is the maximum reward-to-go at $t_{k+1}$ assuming no future uncertainty, i.e., we avoid the use of an {\it a priori} stochastic model for the environment, opting instead to react to random
events by re-solving \eqref{optequation} when this happens. Letting $\tau^*=\max_{i\in\mathcal{T}}\{\tau_i^{\bsym \pi}\}$, we set $J(\tau^*,\mathcal{X}_{K},\mathcal{Y}_K)=0$. Henceforth, we write $J(t_{k},\mathcal{X}_{k},\mathcal{Y}_{k}%
)=J(t_{k})$ for brevity.
Had we assumed a fixed value for $\xi_{k}$ \textit{a priori}, the optimization problem \eqref{optequation} could have been solved using Dynamic
Programming (DP) with the terminal state reached when no target is left in the mission space. However, a fixed $\xi_k$ does not allow for real-time reactions to new events. This fact, along with the size of the state space renders DP impractical and motivates a
receding horizon control approach where we set $\xi_{k}=H_{k}$ based on a
\emph{planning horizon} $H_{k}$ selected at time step $t_{k}$. Then, a finite horizon optimal control problem over
$(t_{k},t_{k}+H_{k}]$ is solved to determine the optimal control $\mathbf{u}^*_{k}$.
This control is maintained for an \emph{action horizon} $h_{k}\leq H_{k}$. A
new optimization problem is re-solved at $t_{k+1}=t_{k}+h_{k}$ or earlier if
any random event is observed. Following \eqref{optequation}, the optimization
problem $\mathbf{P}_{k}$ is
\begin{equation}
\max_{\mathbf{u}_{k}}[J_{I}(\mathbf{u}_{k},t_{k},H_{k}%
)+J(t_{k+1},H_{k+1})] \label{dp_problem}%
\end{equation}
where $J(t_{k+1},H_{k+1})$ and $J_{I}(\mathbf{u}_{k},t_{k},H_{k})$ were defined
above assuming $\xi_{k}=H_{k}$. The immediate reward
is zero if agents do not visit any target during $(t_{k},t_{k}+H_{k}]$,
otherwise it is the reward collected over this interval. Fixing the value of
$H_{k}$ is not constraining, since it is always possible to stop and re-solve
a new problem at any $t>t_{k}$.
\section{CRH Control Scheme}

\label{origCRH} In this section we briefly review the CRH controller
introduced in \cite{Li2006} and identify several limitations of it to motivate
the methods we will use use to overcome them.

{\it Cooperation Scheme:} In \cite{Li2006} the agents divide the mission space into a
\emph{dynamic} partition at each mission step. The degree of an agent's
responsibility for each target depends on the relative proximity of the agent
to the target. A neighbor set is defined for each target which includes its
$b$ closest agents, $b=1,2,\ldots$, sharing the responsibility for that target
until another agent moves closer. A value of $b=2$ is used in the previous and
current work for simplicity. Defining $c_{ij}(t)=\mathit{d}(\mathbf{y}%
_{i},\mathbf{x}_{j}(t))$ to be the direct distance between target $i$ and
agent $j$ at time $t$, let $B^{l}(\mathbf{y}_{i},t)$ be the $l$th closest
agent to target $i$ at time $t$. Formally,
\begin{equation}
B^{l}(i,t)=\argmin\limits_{j\in\mathcal{A},j\neq B^{1}(i,t),...,j\neq
B^{l-1}(i,t)}\{c_{ij}(t)\}
\end{equation}
Let $\beta^{b}(i,t)=\{B^{1}(i,t),...,B^{b}(i,t)\}$ be a neighbor set based on
which a \emph{relative distance} function is defined for all $j\in\mathcal{A}%
$:
\begin{equation}
\delta_{ij}(t)=\left\{
\begin{array}
[c]{ll}%
{\displaystyle\frac{c_{ij}(t)}{\sum\limits_{k\in\beta^{b}(i,t)}c_{ik}(t)}} &
\mbox{if $j \in \beta^b(i,t)$};\\
1 & \mbox{otherwise}
\end{array}
\right.
\end{equation}
Obviously, if $j\notin\beta^{b}(i,t)$, then $\delta_{ij}(t)=1$. The
\emph{relative proximity function} $p(\delta_{ij}(t))$ defined in
\cite{Li2006} is viewed as the probability that target $i$ will be visited by
agent $j$:
\begin{equation}
p(\delta_{ij}(t))=\left\{
\begin{array}
[c]{ll}%
1, & \mbox{if }\delta\leq\Delta\\
\frac{1-\Delta-\delta}{1-2\Delta}, & \mbox{if }\Delta\leq\delta\leq1-\Delta\\
0, & \mbox{if }\delta>1-\Delta
\end{array}
\right.  \label{proximity}%
\end{equation}
Here, $\Delta\in\lbrack0,\frac{1}{2})$ defines the level of cooperation
between the agents. By increasing $\Delta$ an agent will take full
responsibility for more targets, hence less cooperation. Each agent takes on
full responsibility for target $i$ if $\delta_{ij}(t)\leq\Delta$. As shown in
\cite{Li2006}, when $\Delta=\frac{1}{2}$ the regions converge to the Voronoi
tessellation of the mission space, with the location of agents at the centers
of the Voronoi tiles. There is no cooperation region in this case and each
agent is fully responsible for the targets within its own Voronoi tile. On the
other hand, when $\Delta=0$ no matter how close an agent is to a target, the
two agents are still responsible for that target.

{\it Planning and Action Horizons:} In \cite{Li2006}, $H_{k}$ is defined as
the earliest time of an event such that one of the agents can visit one of the
targets:
\begin{equation}
H_{k}=\min\limits_{l\in\mathcal{T}_{k}}\Big\{\frac{\mathit{d}(\mathbf{x}_{j}%
(t_{k}),\mathbf{y}_{l})}{V_{j}}\Big\} \label{HkDef}%
\end{equation}
This definition of \emph{planning horizon} for the CRH controller ensures no
controllable event can take place during this horizon. It also ensures that
re-evaluation of the CRH\ control is event-driven, as opposed to being
specified by a clock which involves a tedious synchronization over agents.
Fig. \ref{activetarget} illustrates how $H_{k}$ is determined when $V_{j}=1$.
The CRH control calculated at $t_{k}$ is maintained for an \emph{action
horizon} $h_{k}\leq H_{k}$. In \cite{Li2006} $h_{k}$ is defined either $(i)$
through a random event that may be observed at $t_{e}\in(t_{k},t_{k}+H_{k}]$
so that $h_{k}=t_{e}-t_{k}$, or $(ii)$ as $h_{k}=\gamma H_{k}$, $\gamma
\in(0,1)$. It is also shown in \cite{Li2006} that under (\ref{HkDef}) the CRH
controller generates a stationary trajectory for each agent under certain
conditions, in the sense that an agent trajectory always converges to some target in finite time.
\subsection{Original CRH Controller Limitations}

\emph{Instabilities in agent trajectories}\textit{:} The optimization problem
considered in \cite{Li2006} uses a potential function which is minimized in
order to maximize the total reward. The stationary\ trajectory guarantee
mentioned above is based on the assumption that all minima of this function
are at the target locations. If this assumption fails to hold, the agents are
directed toward the weighted center of gravity of all targets. This can happen
in missions where targets attain a symmetric configuration, leading to
oscillatory behavior in the agent trajectories. An example is shown in Fig.
\ref{OldCRH3target} with the original CRH controller applied to a single
agent, resulting in oscillations between three targets with equal rewards.
This problem was addressed in \cite{li2006centralized} by introducing a
monotonically increasing cost factor (or penalty) $C(u_{j})$ on the heading
$u_{j}$. While this prevents some of the instabilities, it has to be
appropriately tuned for each mission. We show how to overcome this problem in Section \ref{NewCRH}.

\emph{Hedging and mission time}\textit{:} The agent trajectories in
\cite{Li2006} are specifically designed to direct them to positions close to
targets but not exactly towards them unless they are within a certain
\textquotedblleft capture distance,\textquotedblright\ the motivation being
that no agent should be committed to a target until the latest possible time
so as to hedge against the uncertainty of new, potentially more attractive,
randomly appearing targets. This hedging effect is helpful in handling such
uncertainties, but it can create excessive loss of time, especially when
rewards are declining fast. This can be addressed by more direct movements
towards targets, while also re-evaluating the control frequently enough. The
feasible control set in the original CRH is the continuous set $[0,2\pi]^{N}$,
and by appropriately reducing this to a discrete set of control values we will
show how we can eliminate unnecessary hedging. This also reduces the
complexity of the optimal control problem at each time step and facilitates
the problem solution over a finite number of evaluations.

\emph{Estimation of reward-to-go}\textit{:} In the original CRH control
scheme, the visit times are estimated as the earliest time any
agent $j$ would reach some target $i$, given a control $\mathbf{u}_{k}$ at
time $t_{k}$ and maintained over $(t_{k},t_{k}+H_{k}]$. Thus, the estimated
visit time $\tilde{\tau}_{lj}(\mathbf{u}_{k},t_{k},H_{k})$ for any
$l\in\mathcal{T}_{k}$ is
\[
\resizebox{0.47 \textwidth}{!}{$ \tilde{\tau}_{lj}({\bf u}_{k},t_k,H_k)=t_k+H_k+\mathit{d}({\bf x}_j(t_k+H_k,u_j(t_k)),{\bf y}_l)$}\label{tauold}%
\]
where $\mathbf{x}_{j}(t_{k}+H_{k},u_{j}(t_{k}))$ is the location of the agent $j$ in the next time step given the control $u_{j}(t_{k})$. This is a lower bound for $\tilde{\tau}_{ij}$ feasible only when $M_k \le N$, leading also to a mostly unattainable upper bound for
the total reward. We will show how this estimate is improved by a more
accurate projection of each agent's future trajectory.

\section{The New CRH Controller}

\label{NewCRH} In this section we present a new version of the CRH
controller in \cite{Li2006}. Using the definition of $\mathbf{x}_{j}(t_{k}+H_{k},u_{j}(t_{k}))$ given above and assuming $V_{j}=1$ for all agents, the
feasible set for $\mathbf{x}_{j}(t_{k}+H_{k},u_{j}(t_{k}))$ is defined as
\begin{equation}
\mathcal{F}_{j}(t_{k},H_{k})=\{\mathbf{w}\in S|\mbox{   }\mathit{d}%
(\mathbf{w,x}_{j}(t_{k}))=H_{k}\}\label{Fj}%
\end{equation}
In a Euclidean mission space with no obstacles, $\mathcal{F}_{j}(t_{k},H_{k})$ is the circle
centered at $\mathbf{x}_{j}(t_{k})$ with radius $H_{k}$. Let $q_{i}%
(\mathbf{x}_{j}(t))=\mathbbm{1}\{\mathit{d}(\mathbf{x}_{j}(t),\mathbf{y}%
_{i})\leq s_{i}\}$ be the indicator function capturing whether agent $j$
visits target $i$ at time $t$. We define the immediate reward at $t_{k}$:
\begin{equation}
\resizebox{0.47 \textwidth}{!}{$ J_{\bf I}({\bf u}_k,t_k,H_k)=\sum\limits_{j=1}^N \sum\limits_{l=1}^{M_k}{\lambda_{l}\phi_{l}(t_k+H_k)q_{l}({\bf x}_j(t_k+H_k,u_j(t_k)))}$}\label{JI}%
\end{equation}
Following the definition of $\tau_{i}$ as the visit time of target $i$ in
\eqref{tauformulation}, we define $\tilde{\tau}_{ij}$ as the estimated
visit time of target $i$ by agent $j$. Here $\tilde{\tau}_{ij}>t_{k}$ and
any of the agents in the mission space has a chance to visit target $i$. At
time $t_{k}$ we define an estimate of the reward-to-go $J(t_{k+1},H_{k+1})$
for each $\mathbf{u}_{k}$ as
\begin{align}
\tilde{J} &  (\mathbf{u}_{k},t_{k+1},H_{k+1})=\label{Jk}\\
&  \sum_{j=1}^{N}\sum_{l=1}^{M_{k+1}}{\lambda_{l}\phi_{l}(\tilde{\tau}%
_{lj}(\mathbf{u}_{k},t_{k},H_{k}))\cdot q_{l}(\mathbf{x}_{j}(\tilde{\tau}%
_{lj}(\mathbf{u}_{k},t_{k},H_{k})))}\nonumber
\end{align}
We previously mentioned that the original CRH control approach used a lower
bound for estimating $\tilde{\tau}_{ij}$. We improve this estimate and at the
same time address the other two limitations presented above through three
modifications: $(i)$ We introduce a new \emph{travel cost} for
each target, which combines the distance of a target from agents, its
reward, and a local sparsity factor. $(ii)$ We introduce an\emph{ active
target set} associated with each agent at every control evaluation instant
$t_{k}$. This allows us to reduce the infinite dimensional feasible control
set at $t_{k}$ to a finite set. $(iii)$ We introduce a new
event-driven action horizon $h_{k}$ which makes use of the active target set
definition. With these three modifications, we finally present a new CRH
control scheme based on a process of looking ahead over a number of CRH
control steps and aggregating the remainder of a mission through a
reward-to-go estimation process.

\textbf{Travel Cost Factor:} At each control iteration instant $t_{k}$, we
define $\zeta_{i}(t_{k})$ for target $i$ to measure the sparsity of rewards in its vicinity. Let $\bar{D}_{i}>0$ be such that $\phi_{i}(\bar{D}_{i})=0$ for each $i\in\mathcal{T}$ and set $D_{i}=\min(\bar{D}_{i},T)$ so that the
average reward decreasing rate of $i$ over the mission is given by
$\lambda_{i}/D_{i}$. Let the set $\{1,2,...,I\}$ contain the indices of the
$I$ closest targets to $i$ at time $t_{k}$. We then define the \emph{sparsity
factor} for target $i$ as
\begin{equation}
\zeta_{i}(t_{k})=\sum\limits_{l=1}^{I}\gamma^{l}\frac{\mathit{d}%
(\mathbf{y}_{i},\mathbf{y}_{l})}{\lambda_{l}/D_{l}}\label{zeta}%
\end{equation}
where $\gamma\in\lbrack0~1]$ is a parameter used to shift the weight among the
$I$ targets. Note that $\zeta_{i}(t_{k})$ is time-dependent since the set of
$I$ closest targets changes over time as rewards are collected. A larger
$\zeta_{i}(t_{k})$ implies that target $i$ is located in a relatively sparse
area and vice versa. The parameter $I$ is chosen based on the number of
targets in the mission space and the computation capacity of the controller.
The main idea for $\zeta_{i}(t_{k})$ comes from \cite{Schneider2010} where it
was used to solve TSP problems with clustering. Next, for any point in
$\mathbf{x}\in S$, we define target $i$'s travel cost at time $t_{k}$ as
\begin{equation}
\eta_{i}(\mathbf{x},t_{k})=\frac{\mathit{d}(\mathbf{x},\mathbf{y}_{i}%
)}{\lambda_{i}/D_{i}}+\zeta_{i}(t_{k})\label{eta}%
\end{equation}
The travel cost is proportional to the distance metric, so the farther a
target is from $\mathbf{x}$ the more costly is the visit to that target.
It is inversely proportional to the reward's average decreasing rate, implying
that the faster the reward decreases, the less the travel cost is. Adding
$\zeta_{i}(t_{k})$ gives a target in a sparse area a higher travel cost as
opposed to one where there is an opportunity for a visiting agent to collect
additional rewards from its vicinity.

\textbf{Active Targets:} At each control iteration instant $t_{k}$, we define
for each agent $j$ a subset of targets with the following property relative to
the planning horizon $H_{k}$:
\begin{align}
S_{j} &  (t_{k},H_{k})=\big\{\ell|\exists~\mathbf{x}\in\mathcal{F}_{j}%
(t_{k},H_{k})\label{setpi}\\
&  \mbox{ s.t. }\ell=\argmin\limits_{i\in\mathcal{T}_{k}}\eta_{i}%
(\mathbf{x},t_{k}+H_{k}),~i=1,2,...,M_{k}\big\}\nonumber
\end{align}
This is termed the \emph{active target set} and (\ref{setpi}) implies that
$i\in\mathcal{T}_{k}$ is an active target for agent $j$ if and only if it has
the smallest travel cost from at least one point on the reachable set
$\mathcal{F}_{j}(t_{k},H_{k})$. This means that every $\mathbf{x}%
\in\mathcal{F}_{j}(t_{k},H_{k})$ is associated with one of the active targets
and, therefore, so does every feasible heading $u_{j}(t_{k})$. which
corresponds to active target $l$ if and only if:
\begin{equation}
l=\argmin\limits_{i\in\mathcal{T}_{k}}\eta_{i}\big(\mathbf{x}(t_{k}%
+H_{k},u_{j}(t_{k})),t_{k}+H_{k}\big)\label{headingactive}%
\end{equation}
When $\mathit{d}(x,y)$ is continuous, active targets partition the reachable
set $\mathcal{F}_{j}(t_{k},H_{k})$ into several arcs as illustrated in Fig.
\ref{activetarget} where, for simplicity, we assume $\gamma=0$ in \eqref{zeta}
and all $\lambda_{i}$ and $\phi_{i}(t)$, $i=1,\ldots,M$ are the same. In this
case, agent 1 has four active targets: $S_{1}(t_{k},H_{k})=\{1,2,4,5\}$. The
common feature of all points on an arc is that they correspond to the same
active target with the least travel cost. \begin{figure}[ptb]
\centering
\setlength\fboxsep{0pt} \setlength\fboxrule{0.5pt} \fbox{
\begin{tikzpicture}[scale=0.8]
\draw(5,4) circle (1);
\color{black}
\draw(5,4) circle (0.05);
\draw(4.8,4.2) node{\small ${\bf x}_1(t)$};
\color{black}
\draw(5,2.5) \Square{2pt} ;
\draw (4.7,2.6) node{\small ${\bf y}_1$};
\color{cyan}
\draw(6,5) \Square{2pt};
\draw (6,5.2) node{\small ${\bf y}_2$};
\color{blue}
\draw(7,3) \Square{2pt};
\draw (7,3.2) node{\small ${\bf y}_3$};
\draw(3,3) \Square{2pt};
\draw (3,3.2) node{\small ${\bf y}_6$};
\color{orange}
\draw(3,4) \Square{2pt};
\draw (3,4.2) node{\small ${\bf y}_4$};
\color{red}
\draw (6,4) \Square{2pt};
\draw (6.3,4) node{\small ${\bf y}_5$};;
\color{black}
\draw (5,4)--(6,4);
\draw (5.5,3.7) node{\small ${\bf H_k}$};
\draw [red,very thick] (5.6428,3.2340) arc [radius=1, start angle=-50, end angle= 30];
\draw [cyan,very thick] (5.866,4.5) arc [radius=1, start angle=30, end angle= 120];
\draw [orange,very thick] (4.5,4.866) arc [radius=1, start angle=120, end angle= 230];
\draw [black,very thick] (5.6428,3.2340) arc [radius=1, start angle=-50, end angle= -130];
\draw [dashed] (5,4)--(5,2.5);
\draw [dashed] (5,4)--(6,5);
\draw [dashed] (5,4)--(3,4);
\draw[thin,gray,dotted,step=.5] (2.5,2) grid (7.5,5.5);
\end{tikzpicture}
}\caption{The Active Target Set for agent 1: $S_{1}(\mathbf{x}_{1}%
(t_{k}),H_{k})=\{1,2,4,5\}$}%
\label{activetarget}%
\end{figure}

\textit{Construction of }$S_{j}(t_{k},H_{k})$\textit{:} For each target
$l\in\mathcal{T}_{k}$ and each agent $j$, let $\mathcal{L}_{k}(\mathbf{x}%
_{j}(t_{k}),l)$ be the set of points $\mathbf{x}\in S$ defining the shortest
path from $\mathbf{x}_{j}(t_{k})$ to $\mathbf{y}_{l}$. The intersection of
this set with $\mathcal{F}_{j}(t_{k},H_{k})$ is the set of closest points to
target $l$ in the feasible set:
\begin{equation}
\mathcal{C}_{l,j}(t_{k},H_{k})=\mathcal{L}_{k}(\mathbf{x}_{j}(t_{k}%
),l)\cap\mathcal{F}_{j}(t_{k},H_{k})\label{cp}%
\end{equation}
In a Euclidean mission space, $\mathcal{L}_{k}(\mathbf{x}_{j}(t_{k}),l)$ is a
convex combination (line segment) of $\mathbf{x}_{j}(t_{k})$ and
$\mathbf{y}_{l}$, while $\mathcal{C}_{l,j}(t_{k},H_{k})$ is a single point
where this line crosses the circle $\mathcal{F}_{j}(t_{k},H_{k})$. The
following lemma provides a necessary and sufficient condition for identifying
targets which are active for an agent at $t_{k}$ using $\mathcal{C}_{l,j}(t_{k},H_{k})$.

\begin{lemma}
\label{MRCPlem1} Target $l$ is an active target for agent $j$ at time $t_{k}$
if and only if, $\forall i\in\mathcal{T}_{k}$
\begin{equation}
\eta_{l}(\mathcal{C}_{l,j}(t_{k},H_{k}),t_{k+1})\leq\eta_{i}(\mathcal{C}%
_{l,j}(t_{k},H_{k}),t_{k+1}) \label{MRCPlemma1statement}%
\end{equation}
\end{lemma}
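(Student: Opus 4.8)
The plan is to prove the two implications of the equivalence separately; the ``if'' direction is essentially by definition, while the content lies in ``only if''. For ``if'', suppose (\ref{MRCPlemma1statement}) holds for all $i\in\mathcal{T}_k$. Since $\mathcal{C}_{l,j}(t_k,H_k)\in\mathcal{F}_j(t_k,H_k)$ by (\ref{cp}), the point $\mathbf{x}=\mathcal{C}_{l,j}(t_k,H_k)$ lies in the reachable set and realizes there the smallest travel cost over $\mathcal{T}_k$; by (\ref{setpi}) this is exactly the defining condition for $l\in S_j(t_k,H_k)$, so $l$ is active. (In case of ties in the $\argmin$ of (\ref{setpi}) one adopts the convention that any minimizing target is admissible, or assumes the targets are in general position so that the minimizer is unique.)

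For ``only if'', I would first record the elementary fact that $\eta_l(\cdot,t_{k+1})$ attains its minimum over $\mathcal{F}_j(t_k,H_k)$ exactly at $\mathcal{C}_{l,j}(t_k,H_k)$: in (\ref{eta}) the sparsity term $\zeta_l(t_{k+1})$ is independent of the position, and the triangle inequality gives $d(\mathbf{x}_j(t_k),\mathbf{y}_l)\le d(\mathbf{x}_j(t_k),\mathbf{x})+d(\mathbf{x},\mathbf{y}_l)=H_k+d(\mathbf{x},\mathbf{y}_l)$ for every $\mathbf{x}\in\mathcal{F}_j(t_k,H_k)$, with equality precisely along the shortest path to $\mathbf{y}_l$, i.e.\ at $\mathcal{C}_{l,j}(t_k,H_k)$. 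Now assume $l$ is active and fix a witness $\mathbf{x}^{*}\in\mathcal{F}_j(t_k,H_k)$ with $\eta_l(\mathbf{x}^{*},t_{k+1})\le\eta_i(\mathbf{x}^{*},t_{k+1})$ for all $i\in\mathcal{T}_k$; the goal is to transport these inequalities from $\mathbf{x}^{*}$ to $\mathcal{C}_{l,j}(t_k,H_k)$. I would do so by a sliding argument on the reachable circle: parametrize $\mathcal{F}_j(t_k,H_k)$ by angle and move $\mathbf{x}^{*}$ toward $\mathcal{C}_{l,j}(t_k,H_k)$ along the arc on which $d(\cdot,\mathbf{y}_l)$, hence $\eta_l(\cdot,t_{k+1})$, is non-increasing --- such an arc exists because the distance from a fixed exterior point to a point tracing a circle is unimodal --- and argue that $l$ stays the travel-cost minimizer along this entire arc, so that (\ref{MRCPlemma1statement}) holds at the endpoint. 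If $l$ ceased to be the minimizer somewhere along the arc, continuity of the $\eta_i(\cdot,t_{k+1})$ would give a first such point $\mathbf{x}'$, at which $\eta_{i'}(\mathbf{x}',t_{k+1})=\eta_l(\mathbf{x}',t_{k+1})$ for some $i'\ne l$, and one would seek a contradiction by comparing the variations of $\eta_{i'}$ and $\eta_l$ between $\mathbf{x}'$ and $\mathcal{C}_{l,j}(t_k,H_k)$, using crucially that the choice $H_k=\min_{l\in\mathcal{T}_k}d(\mathbf{x}_j(t_k),\mathbf{y}_l)$ in (\ref{HkDef}) forces every target to lie outside the open reachable disk.

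The hard part is precisely this ``no overtaking'' step. Near $\mathcal{C}_{l,j}(t_k,H_k)$ the cost $\eta_l(\cdot,t_{k+1})$ changes only to second order in the arc parameter whereas a competitor's cost can change to first order, so a plain triangle-inequality bound on $|d(\mathbf{x},\mathbf{y}_{i'})-d(\mathcal{C}_{l,j}(t_k,H_k),\mathbf{y}_{i'})|$ is too lossy and must be replaced by a sharper law-of-cosines estimate on $\mathcal{F}_j(t_k,H_k)$ that genuinely uses the exterior-target property from (\ref{HkDef}). Two minor points to settle separately are the tie-breaking convention above and the degenerate case in which $l$ is a nearest target, so that $\mathbf{y}_l\in\mathcal{F}_j(t_k,H_k)$ and $\mathcal{C}_{l,j}(t_k,H_k)=\mathbf{y}_l$, where the claim is immediate.
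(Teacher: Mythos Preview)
Your ``if'' direction is exactly the paper's: $\mathcal{C}_{l,j}(t_k,H_k)\in\mathcal{F}_j(t_k,H_k)$ is itself a witness in (\ref{setpi}).

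For ``only if'' the paper does \emph{not} slide along the circle. Its argument is three lines: from (\ref{cp}) one has $d(\mathcal{C}_{l,j},\mathbf{y}_l)\le d(\mathbf{x},\mathbf{y}_l)$ for every $\mathbf{x}\in\mathcal{F}_j$, hence $\eta_l(\mathcal{C}_{l,j},t_{k+1})\le\eta_l(\mathbf{x},t_{k+1})$ for all such $\mathbf{x}$; if some $r$ satisfied $\eta_r(\mathcal{C}_{l,j},t_{k+1})<\eta_l(\mathcal{C}_{l,j},t_{k+1})$, chaining gives $\eta_r(\mathcal{C}_{l,j},t_{k+1})<\eta_l(\mathbf{x},t_{k+1})$ for every $\mathbf{x}\in\mathcal{F}_j$, and the paper concludes that $l$ can be the $\argmin$ at no $\mathbf{x}$, contradicting activity. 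So the intended proof avoids your arc parametrization, unimodality, and law-of-cosines entirely.

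That said, your suspicion that ``only if'' hides a real difficulty is warranted. The paper's last implication compares $\eta_l(\mathbf{x})$ to the \emph{constant} $\eta_r(\mathcal{C}_{l,j})$, not to $\eta_r(\mathbf{x})$; as written it does not preclude an $\mathbf{x}^*\in\mathcal{F}_j$ where $\eta_r(\mathbf{x}^*)$ has grown past $\eta_l(\mathbf{x}^*)$ while both exceed $\eta_r(\mathcal{C}_{l,j})$. This is precisely the ``overtaking'' you were worried about, and it can in fact occur: take $H_k$ determined by a distant second agent so that no target lies on $\mathcal{F}_j$, put $\mathbf{y}_l=(3,0)$, $\mathbf{y}_r=(1,1.9)$, agent $j$ at the origin with $H_k=1$, equal $\lambda_i/D_i$, $\zeta\equiv 0$; then $\eta_r<\eta_l$ at $\mathcal{C}_{l,j}=(1,0)$ yet $\eta_l<\eta_r$ on part of the lower arc, so $l$ is active while (\ref{MRCPlemma1statement}) fails. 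Your sliding program would therefore not close either, because the ``no overtaking'' step you flagged as the hard part is not generally true. The upshot: the paper's route is the intended one and is much lighter than yours, but the step you singled out is a genuine gap in both arguments rather than a mere technicality.
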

\begin{proof}
See Appendix.
\end{proof}

\textbf{Action Horizon:} The definition of $h_{k}$ in \cite{Li2006} requires
frequent iterations of the optimization problem through which $\mathbf{u}^*_{k}$
is determined in case no random event is observed to justify such action.
Instead, when there are no random events, we define a new \emph{multiple
immediate target event} to occur when the minimization in \eqref{HkDef}
returns more than one target meaning the agent is at an equal distance to at least two targets. We then define $h_{k}$ to be the shortest time until the first multiple immediate target event occurs in $(t_{k},t_{k}+H_{k}]$:
\begin{align}
\label{hk}
h_k&=\min\Big\{H_k,\inf\big\{t>t_k:\exists l,l^*\in\mathcal{T}_k \mbox{ s.t.} \\\nonumber&\mathit{d}(\mathbf{x}_j(t_k+t,u_{j}(t_{k})),\mathbf{ y}_l)=\mathit{d}(\mathbf{x}_{j}(t_{k}+t,u_{j}(t_{k})),\mathbf{y}_{l^*})   \big\}\Big\}
\end{align}
Consequently, this definition of $h_{k}$ eliminates any unnecessary control evaluation.

\subsection{Look Ahead and Aggregate Process}

In order to solve the optimization problem $\mathbf{P}_{k}$ in
\eqref{dp_problem} using the CRH approach, we need the estimated visit
time $\tilde{\tau}_{ij}(\mathbf{u}_{k},t_{k},H_{k})$ for each $\mathbf{u}_{k}$
through which $\tilde{J}(\mathbf{u}_{k},t_{k+1},H_{k+1})$ in (\ref{Jk}) can be
evaluated. This estimate is obtained by using a projected path for each agent.
This path projection consists of a \emph{look ahead} and an \emph{aggregate}
step. In the first step, the active target set $S_{j}(t_{k},H_{k})$ is
determined for agent $j$. With multiple agents in a mission, at each iteration
step the remaining targets are partitioned using the relative proximity
function in \eqref{proximity}. We denote the target subset for agent $j$ as
$\mathcal{T}_{k,j}$ where:
\begin{equation}
l\in\mathcal{T}_{k,j}\mbox{  }\Longleftrightarrow\mbox{  }p(\delta_{lj}%
(t_{k}))>p(\delta_{lq}(t_{k}))\mbox{  }\forall q\in\mathcal{A}%
\end{equation}
Let $|\mathcal{T}_{k,j}|=M_{k,j}$. All $\tilde{\tau}_{ij}(\mathbf{u}_{k}%
,t_{k},H_{k})$ are estimated as if $j$ would visit targets in its own subset
by visiting the one with the least travel cost first.
We define the agent $j$'s tour as the permutation ${\boldsymbol{\theta}}%
^{j}(\mathbf{u}_{k},t_{k},H_{k})$ specifying the order in which it visits
targets in $\mathcal{T}_{k,j}$. For simplicity, we write ${\boldsymbol{\theta
}}^{j}$ and let ${\boldsymbol{\theta}^{j}_i}$ denote the $i^{th}$ target in
agent $j$'s tour. Then, for all $l\in\mathcal{T}_{k,j}$ and $t_{k+1}%
=t_{k}+H_{k}$:
\[
\eta_{{\boldsymbol{\theta}}^{j}_1}(\mathbf{x}_{j}(t_{k+1},u_{j}%
(t_{k})),t_{k+1})\leq\eta_{l}(\mathbf{x}_{j}(t_{k+1},u_{j}(t_{k})),t_{k+1})
\]
and with $n=2,...,M_{k,j}-1$, for all $l\in\mathcal{T}_{k,j}-\{\theta
^{j}_1,...,\theta^{j}_n\}$,
\[
\eta_{{\boldsymbol{\theta}}^{j}_{n+1}}(\mathbf{y}_{{\boldsymbol{\theta}}%
^{j}_n},\tilde{\tau}_{{\boldsymbol{\theta}}^{j}_n}(\mathbf{u}_{k},t_{k},H_{k}))\leq\eta_{l}(\mathbf{y}_{{\boldsymbol{\theta}}^{j}_n}%
,\tilde{\tau}_{{\boldsymbol{\theta}}^{j}_n}(\mathbf{u}_{k},t_{k},H_{k}))
\]
where
\begin{equation}
\tilde{\tau}_{{\boldsymbol{\theta}}^{j}_n}(\mathbf{u}_{k},t_{k},H_{k}%
)=t_{k}+H_{k}+\sum\limits_{i=1}^{n-1}\mathit{d}(\mathbf{y}%
_{{\boldsymbol{\theta}}^{j}_i},\mathbf{y}_{{\boldsymbol{\theta}}^j_{i+1}})\label{tautilde}%
\end{equation}
This results in the corresponding $\tilde{\tau}_{lj}(\mathbf{u}_{k}%
,t_{k},H_{k})$ for all $l\in\mathcal{T}_{k,j}$. We can now obtain the
reward-to-go estimate as
\begin{align}
J_{\mathbf{A}} &  (\mathbf{u}_{k},t_{k},H_{k})=\label{partitionJK}\\
&  \sum_{j=1}^{N}\sum_{l=1}^{M_{k+1,j}}{\lambda_{l}\phi_{l}(\tilde{\tau}%
_{lj}(\mathbf{u}_{k},t_{k},H_{k}))\cdot q_{l}(\mathbf{x}_{j}(\tilde{\tau}%
_{lj}(\mathbf{u}_{k},t_{k},H_{k})))}\nonumber
\end{align}
Recalling the immediate reward in \eqref{JI}, the optimization problem
$\mathbf{P}_{k}$ becomes:
\begin{equation}
\max\limits_{\mathbf{u}_{k}\in\lbrack0~2\pi]^{N}}\big[J_{\mathbf{I}%
}(\mathbf{u}_{k},t_{k},H_{k})+J_{\mathbf{A}}(\mathbf{u}_{k},t_{k}%
,H_{k})\big]\label{Pkest}%
\end{equation}
In \eqref{Fj} we defined the feasible set for the location of agent $j$ in the
next step $t_{k+1}=t_{k}+H_{k}$. In a Euclidean mission space, each point
$\mathbf{x}\in\mathcal{F}_{j}(t_{k},H_{k})$ corresponds to a heading
$v(\mathbf{x})$ relative to the agent's location $x_{j}(t_{k})$. Using the definition in \ref{cp} let:
\[
\mathcal{V}_{j}(t_{k},H_{k})=\big\{v(\mathbf{x})|\mathbf{x}=\mathcal{C}%
_{l,j}(t_{k},H_{k}),\text{ \ }l\in S_{j}(t_{k},H_{k})\big\}\label{Vj}%
\]
and
\[
\mathbf{V}_{k}=\mathcal{V}_{1}(t_{k},H_{k})\times\mathcal{V}_{2}(t_{k}%
,H_{k})\times...\times\mathcal{V}_{N}(t_{k},H_{k})
\]  In the next lemma, we prove that in a single-agent mission with the objective function
defined in \eqref{Pkest} the optimal control is $u_{1}(t_{k})=v(\mathcal{C}%
_{l,1}(t_{k},H_{k}))$ for some $l\in S_{1}(t_{k},H_{k})$.
\begin{lemma}
\label{MRCPlem2} In a single agent $(N=1)$ mission, if $u_{1}^{\ast}$ is an
optimal solution to the problem:
\begin{equation}
\max\limits_{u_{k}\in\lbrack0~2\pi]}\big[J_{\mathbf{I}%
}(u_{k},t_{k},H_{k})+J_{\mathbf{A}}(u_{k},t_{k},H_{k})\big]
\end{equation}
then $u_{1}^{\ast} \in \mathcal{V}_{j}(t_{k},H_{k})$
\end{lemma}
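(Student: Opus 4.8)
The plan is to exploit the active–target partition of the agent's reachable set. Every heading $u\in[0~2\pi]$ drives the agent, over the planning horizon, to the point $\mathbf{x}_1(t_k+H_k,u)$ on the circle $\mathcal{F}_1(t_k,H_k)$, and by \eqref{headingactive} that point is assigned a unique active target $l\in S_1(t_k,H_k)$. Grouping headings by their active target splits $[0~2\pi]$ into finitely many arcs; I will show that on each arc the objective $J_{\mathbf I}+J_{\mathbf A}$ is maximized by the single heading that steers the agent as close as possible to the corresponding target, and that this heading lies in $\mathcal{V}_1(t_k,H_k)$.

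For each $l\in S_1(t_k,H_k)$, let $A_l=\{u\in[0~2\pi] : l=\argmin_{i\in\mathcal{T}_k}\eta_i(\mathbf{x}_1(t_k+H_k,u),t_k+H_k)\}$. Because $\mathcal{F}_1(t_k,H_k)$ is a circle on which distinct headings give distinct points, the $A_l$ cover $[0~2\pi]$ (the finitely many boundary headings at which the $\argmin$ is not unique may be assigned to any tying arc), and by the definition of $S_1(t_k,H_k)$ these are exactly the nonempty arcs. Moreover, \eqref{cp} identifies $\mathcal{C}_{l,1}(t_k,H_k)$ — the crossing of the shortest path from $\mathbf{x}_1(t_k)$ to $\mathbf{y}_l$ with $\mathcal{F}_1(t_k,H_k)$ — as the closest feasible point to $\mathbf{y}_l$, and Lemma~\ref{MRCPlem1} guarantees $v(\mathcal{C}_{l,1}(t_k,H_k))\in A_l$.

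The heart of the argument is to fix an arc $A_l$ and show that $J_{\mathbf I}+J_{\mathbf A}$, restricted to $A_l$, is maximized at $v(\mathcal{C}_{l,1}(t_k,H_k))$. With $N=1$ the projected tour ranges over all remaining targets; for every $u\in A_l$ its first target is $\boldsymbol{\theta}^1_1=l$, after which the least–travel–cost recursion is issued from $\mathbf{y}_l$ and no longer depends on $u$, so the whole tour and all estimated visit times in \eqref{tautilde} are constant over $A_l$; hence $J_{\mathbf A}(\cdot,t_k,H_k)$ is constant on $A_l$. For the immediate reward, collecting some target $l'$'s reward at $t_k+H_k$ forces $\mathbf{x}_1(t_k+H_k,u)$ to within $s_{l'}$ of $\mathbf{y}_{l'}$, which — provided the capture radii $s_i$ are small relative to the inter-target travel costs — makes $l'$ the travel–cost minimizer there, i.e.\ $u\in A_{l'}$. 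Thus on $A_l$ the only immediately collectable reward is that of $l$, so $J_{\mathbf I}(u,t_k,H_k)=\lambda_l\phi_l(t_k+H_k)\,q_l(\mathbf{x}_1(t_k+H_k,u))$, a non-increasing function of $d(\mathbf{x}_1(t_k+H_k,u),\mathbf{y}_l)$. Since $\mathcal{C}_{l,1}(t_k,H_k)$ is the feasible point closest to $\mathbf{y}_l$ and lies in $A_l$, the heading $v(\mathcal{C}_{l,1}(t_k,H_k))$ minimizes $d(\cdot,\mathbf{y}_l)$ over $A_l$ and therefore maximizes $J_{\mathbf I}+J_{\mathbf A}$ on $A_l$.

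Putting these together, $\max_{u\in[0~2\pi]}[J_{\mathbf I}+J_{\mathbf A}]=\max_{l\in S_1(t_k,H_k)}\big(J_{\mathbf I}+J_{\mathbf A}\big)\big|_{u=v(\mathcal{C}_{l,1}(t_k,H_k))}$, so an optimal $u_1^{\ast}$ may be taken equal to $v(\mathcal{C}_{l,1}(t_k,H_k))$ for the maximizing $l$, which belongs to $\mathcal{V}_1(t_k,H_k)$. I expect the only delicate step to be the immediate-reward analysis on an arc — making precise that no target other than $l$ can contribute there, and handling the fact that $J_{\mathbf I}$ and $J_{\mathbf A}$ are merely piecewise constant, with jumps at capture boundaries and at arc boundaries. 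The monotonicity in $d(\cdot,\mathbf{y}_l)$, the covering of $[0~2\pi]$ by the arcs, and the reduction to the finite set $\mathcal{V}_1(t_k,H_k)$ are then routine.
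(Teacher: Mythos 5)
Your overall architecture — partition the reachable circle into arcs by active target, then argue that on each arc the objective is maximized at $v(\mathcal{C}_{l,1}(t_k,H_k))$ — is exactly the paper's. The gap is in the per-arc step: your claim that $J_{\mathbf A}$ is \emph{constant} on an arc $A_l$ is wrong, and it is precisely the non-constancy of $J_{\mathbf A}$ that makes the lemma true. The estimated visit time of the first target in the projected tour is $t_k+H_k+\mathit{d}(\mathbf{x}_1(t_k+H_k,u),\mathbf{y}_l)$, and every subsequent visit time inherits this same offset; the literal reading of \eqref{tautilde} that drops this first leg is inconsistent with how the paper actually computes $\tilde{\tau}$ in the appendix. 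This matters because, by \eqref{HkDef}, only the nearest target(s) lie on $\mathcal{F}_1(t_k,H_k)$; on every other arc $J_{\mathbf I}\equiv 0$, so if $J_{\mathbf A}$ were constant there, \emph{every} heading in that arc would be co-optimal whenever that arc attains the maximum, and the conclusion $u_1^{\ast}\in\mathcal{V}_1(t_k,H_k)$ would fail. The paper's argument on such arcs (its Case 2) is that the common offset $\mathit{d}(\mathbf{x},\mathbf{y}_l)$ is minimized over the arc uniquely at $\mathcal{C}_{l,1}(t_k,H_k)$ (the intersection of the shortest path to $\mathbf{y}_l$ with the circle, by \eqref{cp}), so all visit times are earliest there, and since each $\phi_i$ is non-increasing the reward-to-go is maximized there. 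You need this monotonicity-in-visit-times argument, not constancy.

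Two smaller points. First, on the arc whose target does lie on the circle (the paper's Case 1), the comparison is not just "$J_{\mathbf I}$ is larger at the closest point": at $\mathcal{C}_{l,1}$ the reward of $l$ enters $J_{\mathbf I}$ at time $t_k+H_k$ while elsewhere on the arc it enters $J_{\mathbf A}$ at the later time $t_k+H_k+\mathit{d}(\mathbf{x},\mathbf{y}_l)$, and the remaining tour is delayed by the same amount; the paper compares the two tours $\boldsymbol{\theta}$ and $\boldsymbol{\theta}'=\{l,\boldsymbol{\theta}\}$ term by term using monotonicity of $\phi$. Second, your appeal to "capture radii $s_i$ small relative to inter-target travel costs" is an extra hypothesis the paper does not impose; it is not needed once the argument is run through visit times rather than through which targets are capturable on which arc.
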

\begin{proof}
See Appendix.
\end{proof}
The implication of this lemma is that we can reduce the number of
feasible controls to a finite set as opposed to the infinite set $[0,2\pi]$.
\begin{theorem}
\label{MRCPtheorem1} In a multi-agent MRCP mission, if $\mathbf{u}^{\ast
}=[u_{1}^{\ast},...,u_{N}^{\ast}]$ is the optimal solution to the problem in
\eqref{Pkest} then $\mathbf{u}^{\ast}\in\mathbf{V}_{k}$.
\end{theorem}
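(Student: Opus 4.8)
The plan is to reduce the $N$-dimensional optimization in \eqref{Pkest} to $N$ decoupled scalar problems, each of which is settled by Lemma~\ref{MRCPlem2}, and then to reassemble the coordinate-wise optimizers into a point of $\mathbf{V}_{k}$.

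First I would argue that the objective of \eqref{Pkest} is \emph{separable across agents}: it can be written as $\sum_{j=1}^{N}F_{j}\big(u_{j}(t_{k})\big)$, where $F_{j}$ gathers the $j$-th summand of $J_{\mathbf{I}}$ in \eqref{JI} together with the $j$-th summand of $J_{\mathbf{A}}$ in \eqref{partitionJK}. The content of this step is that $F_{j}$ depends on $\mathbf{u}_{k}=[u_{1}(t_{k}),\dots,u_{N}(t_{k})]$ only through the single coordinate $u_{j}(t_{k})$, and this follows from three facts established in Section~\ref{NewCRH}: (i)~the relative-proximity partition $\{\mathcal{T}_{k,j}\}_{j\in\mathcal{A}}$ is a function of the state $(\mathcal{X}_{k},\mathcal{Y}_{k})$ alone, since the proximities $p(\delta_{lj}(t_{k}))$ are evaluated at the agent positions at $t_{k}$ and not along their future paths; (ii)~by \eqref{setpi} the active target set $S_{j}(t_{k},H_{k})$ and the resulting partition of $\mathcal{F}_{j}(t_{k},H_{k})$ into arcs depend only on $\mathbf{x}_{j}(t_{k})$ and $H_{k}$; and (iii)~once $u_{j}(t_{k})$ is fixed, agent $j$'s projected tour ${\boldsymbol{\theta}}^{j}$ over $\mathcal{T}_{k,j}$ and the estimated visit times $\tilde{\tau}_{lj}$ given by \eqref{tautilde} are determined by $u_{j}(t_{k})$ alone. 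Hence neither the $j$-th immediate-reward term nor the $j$-th reward-to-go term involves any heading other than $u_{j}(t_{k})$, which gives the decomposition.

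Next I would invoke the product structure of the feasible set. Since $[0,2\pi]^{N}=\prod_{j=1}^{N}[0,2\pi]$ and the objective is a sum of per-coordinate terms,
\[
\max_{\mathbf{u}_{k}\in[0,2\pi]^{N}}\ \sum_{j=1}^{N}F_{j}(u_{j})\;=\;\sum_{j=1}^{N}\ \max_{u_{j}\in[0,2\pi]}F_{j}(u_{j}),
\]
so that $\mathbf{u}^{\ast}=[u_{1}^{\ast},\dots,u_{N}^{\ast}]$ solves \eqref{Pkest} if and only if each $u_{j}^{\ast}$ maximizes $F_{j}$ over $[0,2\pi]$. But $\max_{u_{j}\in[0,2\pi]}F_{j}(u_{j})$ is exactly the single-agent problem treated in Lemma~\ref{MRCPlem2}, applied to agent $j$ with its position $\mathbf{x}_{j}(t_{k})$, planning horizon $H_{k}$, and target set $\mathcal{T}_{k,j}$ over which the projected tour is built. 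Lemma~\ref{MRCPlem2} therefore yields $u_{j}^{\ast}\in\mathcal{V}_{j}(t_{k},H_{k})$ for every $j\in\mathcal{A}$, and hence $\mathbf{u}^{\ast}\in\mathcal{V}_{1}(t_{k},H_{k})\times\cdots\times\mathcal{V}_{N}(t_{k},H_{k})=\mathbf{V}_{k}$, which is the claim.

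The step I expect to be the main obstacle is making the reduction of agent $j$'s scalar subproblem to the hypotheses of Lemma~\ref{MRCPlem2} fully rigorous: in the multi-agent objective the immediate-reward contribution of agent $j$ in \eqref{JI} runs over \emph{all} of $\mathcal{T}_{k}$, not just over $\mathcal{T}_{k,j}$, so one must verify that the mechanism behind Lemma~\ref{MRCPlem2} --- that within each arc of $\mathcal{F}_{j}(t_{k},H_{k})$ associated with a fixed active target $l\in S_{j}(t_{k},H_{k})$ it is optimal to steer agent $j$ toward the closest point $\mathcal{C}_{l,j}(t_{k},H_{k})$, using the monotonicity of the discount functions $\phi_{i}$ --- is not disturbed by immediate rewards from targets outside $\mathcal{T}_{k,j}$. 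Since the active target set in \eqref{setpi} is itself defined by minimizing the travel cost over the full set $\mathcal{T}_{k}$, and since the optimality of the arc endpoint is a purely geometric statement about agent $j$'s own trajectory, that argument transfers without change; making this explicit is the only point that requires care beyond bookkeeping.
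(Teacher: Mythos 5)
Your proposal is correct and follows essentially the same route as the paper: the paper's proof also observes that the objective in \eqref{Pkest} separates across agents, each of whose subproblem (restricted to its own target subset $\mathcal{T}_{k,j}$) is a one-agent instance, and then applies Lemma~\ref{MRCPlem2} coordinate-wise to conclude $\mathbf{u}^{\ast}\in\mathbf{V}_{k}$. Your write-up is in fact more explicit than the paper's two-line argument, particularly in flagging and resolving the point that the immediate reward in \eqref{JI} sums over all of $\mathcal{T}_{k}$ rather than just $\mathcal{T}_{k,j}$.
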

\begin{proof}
See Appendix
\end{proof}
Theorem \ref{MRCPtheorem1} reduces the problem $\mathbf{P}_{k}$ to
a maximization problem over a finite set of feasible controls:
\[
\max\limits_{\mathbf{u}_{k}\in\mathbf{V}_{k}%
}\big[J_{\mathbf{I}}(\mathbf{u}_{k},t_{k},H_{k})+J_{\mathbf{A}}(\mathbf{u}%
_{k},t_{k},H_{k})\big]\label{CountableJ}%
\]
This reduces the size of the problem compared to the original CRH controller.
The following algorithm generates controls in this manner at each step $t_{k}$
and is referred to as the \textquotedblleft One-step
Lookahead\textquotedblright\ CRH controller (extended to a \textquotedblleft%
$K$-step Lookahead\textquotedblright\ algorithm in what follows).\\

{\bf CRH One-step Lookahead Algorithm:}
{\small
\begin{enumerate}
\item Determine $H_{k}$ through (\ref{HkDef}).
\item Determine the active target set $S_{j}(t_{k},H_{k})$ through
(\ref{setpi}) for all $j\in\mathcal{A}$.
\item Evaluate $J_{\mathbf{A}}(\mathbf{u}_{k},t_{k},H_{k})$ for all
$\mathbf{u}_{k}\in\mathbf{V}_{k}$ through \eqref{tautilde} and \eqref{partitionJK}
\item Solve $\mathbf{P}_{k}$ in (\ref{Pkest}) and determine $\mathbf{u}%
_{k}^{\ast}$.
\item Evaluate $h_{k}$ through \eqref{hk}
\item Execute $\mathbf{u}_{k}^{\ast}$ over $(t_{k},t_{k}+h_{k}]$ and repeat
Step 1 with $t_{k+1}=t_{k}+h_{k}$.
\end{enumerate}}
\subsection{$K$-Step Lookahead}

The One-step Lookahead CRH controller can be extended to a $K$-step Lookahead
controller with $K>1$ by exploring additional possible future paths for each
agent at each time step $t_{k}$. In the One-step Lookahead algorithm, the optimal reward-to-go is estimated based on a single tour over the remaining targets.
The $K$-step Lookahead algorithm estimates this reward by considering more
possible tours for each agent as follows. For any feasible $u_{j}(t_{k}%
)\in\mathcal{V}_{j}(t_{k},H_{k})$ the agent is hypothetically placed at the
corresponding next step location $\mathbf{x}_{j}(t_{k+1})$. This is done for
all agents to maintain synchronicity of the solution. At $\mathbf{x}%
_{j}(t_{k+1})$, a new active target set is determined, implying that
agent $j$ can have $|S_{j}(t_{k}+H_{k},H_{k+1})|$ possible paths. At this
point, we can repeat the same procedure by hypothetically moving the agent to
a new feasible location from the set $\mathcal{F}_{j}(t_{k+1},H_{k+1})$ or we
can stop and estimate the reward-to-go for each available path. Thus, for a
Two-Step Lookahead, problem $\mathbf{P}_{k}$ becomes:
\begin{align}
\nonumber
\max_{\mathbf{u}_{k}\in\mathbf{V}_{k}} &  \Big[J_{\mathbf{I}}(\mathbf{u}%
_{k},t_{k},H_{k})+\max_{\mathbf{u}_{k+1}\in\mathbf{V}_{k+1}}\big[J_{\mathbf{I}%
}(\mathbf{u}_{k+1},t_{k+1},H_{k+1})\\\label{2Step}
&  +J_{\mathbf{A}}(\mathbf{u}_{k+1},t_{k+1},H_{k+1})\big]\Big]
\end{align}
We extend the previous algorithm to a 2-step lookahead in the following. For a $K$-step we should repeat steps 1 and 2 for $K$ times before moving to step 4 of the algorithm.\\

\textbf{CRH 2-step Lookahead Algorithm:}
{\small
\begin{enumerate}
\item Determine $H_{k}$ through (\ref{HkDef}).
\item Determine the active target set $S_{j}(t_{k},H_{k})$ through
(\ref{setpi}) for all $j\in\mathcal{A}$.
\item Repeat steps $1\&2$ for $t_{k+1}=t_k+H_k$ and ${\mathbf x}_j(t_{k+1})={\mathbf x}_j(t_{k+1},u_j(t_k))$ for all $u_j(t_k)\in\mathcal{V}_j(t_k,H_k)$.
\item Evaluate $J_{\mathbf{A}}(\mathbf{u}_{k+1},t_{k+1},H_{k+1})$ for all
$\mathbf{u}_{k+1}\in\mathbf{V}_{k+1}$ through \eqref{tautilde} and \eqref{partitionJK}
\item Solve $\mathbf{P}_{k}$ in (\ref{2Step}) and determine $\mathbf{u}%
_{k}^{\ast}$.
\item Evaluate $h_{k}$ through \eqref{hk}
\item Execute $\mathbf{u}_{k}^{\ast}$ over $(t_{k},t_{k}+h_{k}]$ and repeat
Step 1 with $t_{k+1}=t_{k}+h_{k}$.
\end{enumerate}}
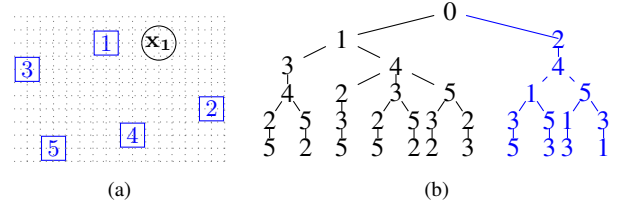
\begin{figure}
\centering
\begin{subfigure}[]
{\begin{tikzpicture}[scale=0.35]
\color{black}
\draw(7,5) circle (0.65);
\draw(7,5) node{\small ${\bf x_1}$};
\color{blue}
\draw(5,5) \Square{13pt};
\draw (5,5) node{\small $1$};
\draw(9,2.5) \Square{13pt};
\draw (9,2.5) node{\small $2$};
\draw(2,4) \Square{13pt};
\draw (2,4) node{\small $3$};
\draw(6,1.5) \Square{13pt};
\draw (6,1.5) node{\small $4$};
\draw(3,1) \Square{13pt};
\draw (3,1) node{\small $5$};
\draw[thin,gray,dotted,step=.5] (1.5,0.5) grid (9.5,6);
\end{tikzpicture}
\label{5targetmission}}
\end{subfigure}
\begin{subfigure}[]
{\begin{tikzpicture}[scale=0.24,every node/.style={draw=none},level 1/.style={sibling distance=120mm},level 2/.style={sibling distance=60mm}, level 3/.style={sibling distance=30mm}, level 4/.style={sibling distance=20mm}]
\node (root) {0}
child{ node{1}
child{ node{3}
child{node{4}
child {node {2}
child {node {5}}}
child {node {5}
child {node {2}}}}}
child{ node{4}
child{ node{2}
child{ node{3}
child{ node{5}}}}
child{ node{3}
child {node {2}
child {node {5}}}
child {node {5}
child {node {2}}}}
child{ node{5}
child{ node{3}
child{ node{2}}}
child{ node{2}
child{ node{3}}}}}}
child{[blue] node{2}
child{  node{4}
child{ node{1}
child {node {3}
child {node {5}}}
child {node {5}
child {node {3}}}}
child{ node{5}
child {node {1}
child {node {3}}}
child{ node{3}
child{ node{1}}}}}}
;
\end{tikzpicture}
\label{5targetmissiontree}}
\end{subfigure}
\caption{(a): Five-target mission, (b): The tree structure }%
\end{figure}
This procedure can easily be repeated and the whole process can be
represented as a tree structure where the root is the initial location of the
agent and a path from the root to each leaf is a possible target sequence for
the agent. In Fig. \ref{5targetmission} a sample mission with 5 targets is
shown with its corresponding tree in Fig. \ref{5targetmissiontree}. A
brute-force method involves $5!=120$ possible paths, whereas the tree
structure for this mission is limited to 11 paths. The active target set for
agent 1 consists of targets ${1,2}$. Each of these active targets would then
generate several branches in the tree, as shown. We calculate the total reward
for each path to find the optimal one. Determining the complete tree for large
$K$ is time consuming. The $K$-step Lookahead CRH controller enables us to
investigate the tree down to a few levels and then calculate an estimated
reward-to-go for the rest of the selected path. However, there is no guarantee on the monotonicity of the results with more lookahead steps and in some cases the final result degrades with one more lookahead step.
\subsection{Two-Target, One-Agent Case}
The simplest case of the MRCP is the case with one agent and two targets.
Obviously, this is an easy routing problem whose solution is one of the two
possible paths the agent can take. We prove that the One-step Lookahead
algorithm solves the problem with any linearly decreasing reward function.
Consider a mission with one agent and two targets with initial rewards and
deadlines $\lambda_{1},D_{1}$ and $\lambda_{2},D_{2}$ respectively. The
analytical solution for this case reveals whether path $\theta_{1}=(1,2)$ or
$\theta_{2}=(2,1)$ is optimal. Following the previous analysis, we assume that
$V_{1}=1$ and set $\mathbf{x}_{1}(t_{k})=\mathbf{x}$ for the sake of brevity.
We also assume the rewards are linearly decreasing to zero: $\phi
_{i}(t)=1-\frac{t}{D_{i}}$. The two possible rewards are given by:
\begin{equation}
R_{(1,2)}=\lambda_{1}\big[1-\frac{\mathit{d}(\mathbf{x},\mathbf{y}_{1})}%
{D_{1}}]+\lambda_{2}\big[1-\frac{\mathit{d}(\mathbf{x},\mathbf{y}%
_{1})+\mathit{d}(\mathbf{y}_{1}-\mathbf{y}_{2})}{D_{2}}]\label{R12}%
\end{equation}%
\begin{equation}
R_{(2,1)}=\lambda_{2}\big[1-\frac{\mathit{d}(\mathbf{x},\mathbf{y}_{2})}%
{D_{2}}]+\lambda_{1}\big[1-\frac{\mathit{d}(\mathbf{x},\mathbf{y}%
_{2})+\mathit{d}(\mathbf{y}_{2}-\mathbf{y}_{1})}{D_{1}}]\label{R21}%
\end{equation}
Therefore, if $R_{(1,2)}>R_{(2,1)}$, it follows that the following inequality
must hold:
\begin{gather}
\frac{\lambda_{1}}{D_{1}}\big[\mathit{d}(\mathbf{x},\mathbf{y}_{1}%
)-\mathit{d}(\mathbf{x},\mathbf{y}_{2})+\mathit{d}(\mathbf{y}_{2}%
,\mathbf{y}_{1})\big]<\nonumber\\
\quad\frac{\lambda_{2}}{D_{2}}\big[\mathit{d}(\mathbf{x},\mathbf{y}%
_{2})-\mathit{d}(\mathbf{x},\mathbf{y}_{1})+\mathit{d}(\mathbf{y}%
_{1},\mathbf{y}_{2})\big]\label{Twotargetanalytic}%
\end{gather}
and the optimal path is $\theta^{\ast}=\theta_{1}$
. Letting $\theta^{CRH}$ denote the path obtained by the One-step Lookahead
CRH controller, we show next that this controller recovers the optimal path
$\theta^{\ast}$.

\begin{theorem}
\label{twotargettheorem} Consider a two-target, one-agent mission. If $\gamma=0$ in \eqref{zeta} and target $i$'s reward at
time $t$ is $\lambda_{i}(1-\frac{t}{D_{i}})$, then $\theta^{CRH}=\theta^{\ast
}$.\end{theorem}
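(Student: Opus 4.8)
The plan is to show that the entire trajectory is decided at the first control evaluation step. In a two-target mission, once either target is visited the remaining one is forced, so both the optimal path $\theta^{\ast}$ and the controller's path $\theta^{CRH}$ are determined by which target is visited first; I would therefore prove that at the first step $t_k$ (with $t_k=0$) the One-step Lookahead controller sends the agent toward $\mathbf{y}_{\theta_1^{\ast}}$, and then that this decision is not reversed before that target is reached. Write $\mathbf{x}=\mathbf{x}_1(t_k)$, $a=\mathit{d}(\mathbf{x},\mathbf{y}_1)$, $b=\mathit{d}(\mathbf{x},\mathbf{y}_2)$, $c=\mathit{d}(\mathbf{y}_1,\mathbf{y}_2)$, $p=\lambda_1/D_1$, $q=\lambda_2/D_2$, and assume (relabelling the targets if needed) that $a\le b$, so $H_k=a$ by \eqref{HkDef}. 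Two preliminary facts set up the argument: since $\gamma=0$ the sparsity term \eqref{zeta} is identically zero, so $\eta_i(\mathbf{x},t)=\mathit{d}(\mathbf{x},\mathbf{y}_i)/(\lambda_i/D_i)$ carries no sparsity offset; and since $\mathcal{C}_{1,1}(t_k,H_k)=\mathbf{y}_1$, hence $\eta_1(\mathbf{y}_1,\cdot)=0$, Lemma \ref{MRCPlem1} shows target $1$ is always active, so $\mathcal{V}_1(t_k,H_k)$ is either $\{v(\mathbf{y}_1)\}$ or $\{v(\mathbf{y}_1),v(\mathcal{C}_{2,1}(t_k,H_k))\}$ according to whether target $2$ is active, and by Lemma \ref{MRCPlem2} the controller's decision is just the better of these (at most two) headings under \eqref{Pkest}.

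The core is a case split on whether target $2$ is active. \emph{If it is}, I would evaluate \eqref{Pkest} for each heading and show it equals exactly $R_{(1,2)}$ and $R_{(2,1)}$ of \eqref{R12}--\eqref{R21}: for $v(\mathbf{y}_1)$ the agent reaches $\mathbf{y}_1$ at time $a$ (immediate reward $\lambda_1(1-a/D_1)$) and the aggregate tour over the single remaining target gives $\tilde{\tau}_2=a+c$; for $v(\mathcal{C}_{2,1})$ the agent is still en route at time $a$ (immediate reward $0$), and activity of target $2$ forces the aggregate tour to be $(2,1)$, giving $\tilde{\tau}_2=a+(b-a)=b$ and $\tilde{\tau}_1=b+c$. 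These identities use $V_1=1$, the linearity of $\phi_i$, and that the projected paths are concatenations of straight segments whose cumulative lengths to $\mathbf{y}_1$ and $\mathbf{y}_2$ agree with the true path lengths; hence the controller selects $\arg\max\{R_{(1,2)},R_{(2,1)}\}$, whose leading target is $\theta_1^{\ast}$ by \eqref{Twotargetanalytic}. \emph{If target $2$ is not active}, then $\mathcal{V}_1(t_k,H_k)=\{v(\mathbf{y}_1)\}$, the controller is forced to $\mathbf{y}_1$, i.e. $\theta^{CRH}=(1,2)$, and it remains to show $R_{(1,2)}\ge R_{(2,1)}$. Here I would read inactivity via Lemma \ref{MRCPlem1} as $p\,(b-a)>q\,\mathit{d}(\mathcal{C}_{2,1},\mathbf{y}_1)$, then invoke the triangle inequality on $\mathbf{y}_1,\mathcal{C}_{2,1},\mathbf{y}_2$ (valid since $\mathcal{C}_{2,1}$ lies on the segment $\mathbf{x}\mathbf{y}_2$) to get $\mathit{d}(\mathcal{C}_{2,1},\mathbf{y}_1)\ge c-(b-a)$, and finally the elementary identity $R_{(1,2)}-R_{(2,1)}=p\,(c+(b-a))-q\,(c-(b-a))$ read off from \eqref{R12}--\eqref{R21}; these combine to give $R_{(1,2)}\ge R_{(2,1)}$, so $\theta^{\ast}=(1,2)=\theta^{CRH}$.

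I expect the main obstacle to be this second case: converting the purely geometric statement that target $2$ is inactive --- a fact about the travel costs $\eta_i$ on the reachable circle $\mathcal{F}_1(t_k,H_k)$ --- into the ordering of the two \emph{realizable} path rewards $R_{(1,2)}$ and $R_{(2,1)}$, which is exactly where the triangle inequality and the $\gamma=0$ (clean scaled-distance) form of $\eta_i$ are indispensable; the linearity of $\phi_i$ and $N=1$ enter in the first case to make the aggregate visit-time estimates exact. The remaining technical point --- that the controller does not reverse its first choice at later evaluation steps --- I expect to dispatch by noting that as the agent advances along the straight segment toward $\mathbf{y}_{\theta_1^{\ast}}$ the updated position stays on that segment, the two candidate-heading objectives remain the true path rewards from the new position (or the single forced heading still points at $\mathbf{y}_1$), and a reverse-triangle-inequality bound $|\mathit{d}(\mathbf{x}',\mathbf{y}_i)-\mathit{d}(\mathbf{x},\mathbf{y}_i)|\le\mathit{d}(\mathbf{x},\mathbf{x}')$ preserves the strict preference, so the agent reaches $\mathbf{y}_{\theta_1^{\ast}}$ and then the remaining target, giving $\theta^{CRH}=\theta^{\ast}$.
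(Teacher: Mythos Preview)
Your proposal is correct and follows essentially the same approach as the paper: the identical WLOG normalization $a\le b$, the same case split on whether target~2 is active, the same computation showing $J(u_1)=R_{(1,2)}$ and $J(u_2)=R_{(2,1)}$ in the active case, and the same triangle-inequality argument (yours via the clean identity $R_{(1,2)}-R_{(2,1)}=p(c+(b-a))-q(c-(b-a))$, the paper's via a longer chain of added terms) in the inactive case. Your additional paragraph on non-reversal at later action-horizon steps is a point the paper's proof simply omits; your sketch is on the right track (the chosen-path reward stays fixed while the alternate one can only drop, by the reverse triangle inequality), though a full version would need to track which target is closest after each move and re-verify the active-target structure there.
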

\begin{proof}
See Appendix.
\end{proof}

\subsection{Monotonicity in the Look Ahead Steps}
Questions that come into mind after introducing the multiple look ahead steps CRH controller are: How many look ahead steps should we perform? Is it always better to do more look ahead steps? Or in a simple way, does the more steps look ahead always gives a better answer than less?

The answer to the first question is that it depends on the size of the problem and our computation capability. We can even adjust the number of look ahead steps during the course of the solution. We can start with more when there is more targets available and lower the number once there is only a few targets in the mission space.
The answer to the other two questions is No. As much as one would like to have a sort of monotonicity effect in this problem, the complexity of the problem and its significant dependence on the mission topology causes the non-monotone results with different number of look ahead steps. Here we are going to show a case with 10 equally important targets and one agent. This is a straight forward TSP for which the optimal path can be obtained through an exhaustive search. For this case the one and two look ahead steps CRH controllers find the same path with a reward of 92.6683. However, once we move up to three look ahead steps, the CRH controller degrade to a worst path with 92.5253 reward. The path for these controllers is shown in figures \ref{1s} and \ref{3s}. The optimal path that is calculated through the exhaustive search is obtained by the CRH controllers when we go up to six look ahead steps (Fig. \ref{6s}). The observation is that the non-monotone results from higher number of look ahead is a local effect and once we increase the look ahead steps CRH controller can solve the problem to the optimality. This obviously is not the case for all missions and in some cases the optimal path can not be retrieved by CRH controller with any look ahead steps.
\begin{figure}[ht]
\begin{centering}
\begin{subfigure}[One Step Look Ahead: {$[1-9-7-4-3-10-2-6-5-8]$} - Reward=92.6683 - Time=868]
{\includegraphics[scale=0.3]{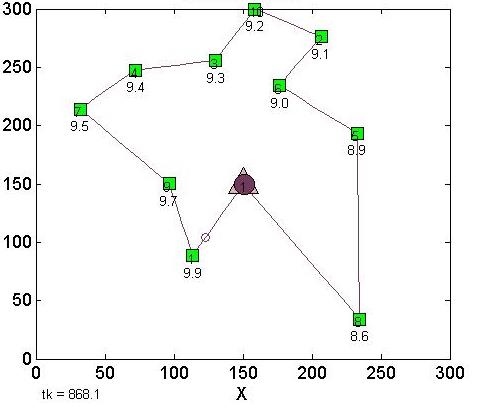}
\label{1s}}
\end{subfigure}
\begin{subfigure}[Three Step Look Ahead: {$[6-2-10-3-4-7-9-1-8-5]$} - Reward=92.5253 - Time=897]
{\includegraphics[scale=0.3]{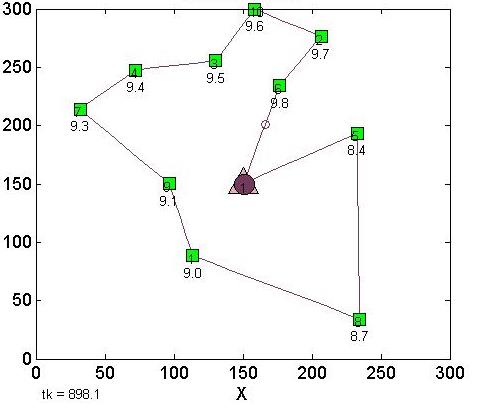}
\label{3s}}
\end{subfigure}\\
\begin{subfigure}[Five Step Look Ahead: {$[5-6-2-10-3-4-7-9-1-8]$} - Reward=92.6031 - Time=862]
{\includegraphics[scale=0.31]{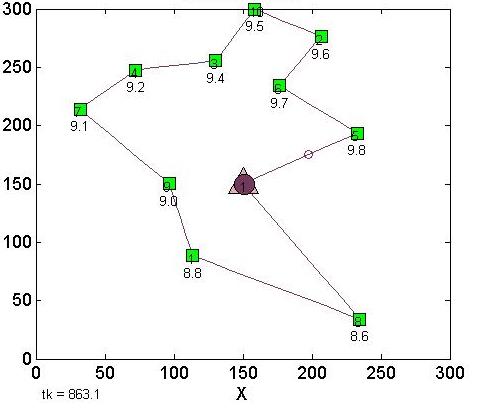}
\label{5s}}
\end{subfigure}
\begin{subfigure}[Six Step Look Ahead: {$[9-7-4-3-10-2-6-5-1-8]$} - Reward=92.7436 - Time=916]
{\includegraphics[scale=0.31]{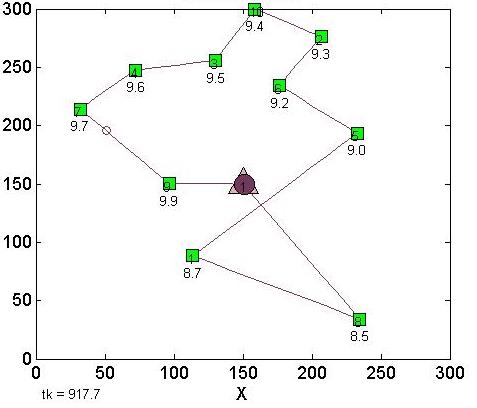}
\label{6s}}
\end{subfigure}
\end{centering}
\label{monotonicity}
\caption{10 Target mission with different number of look ahead steps}
\end{figure}
\section{Simulation Examples}

\label{Numerical} We provide several MRCP examples in which the performance of
the original and new CRH controllers is compared. In all examples, we use
parameters $\Delta=0$, $V_{j}=1$, $\alpha_{i}=1$, $\beta_i=1$.

{\bf TSP Benchmark Comparison:}
We use the CRH controller as a path planning algorithm for
some benchmark TSP problems. Table \ref{TSPtable} shows the result of the
2-step and 3-step Lookahead algorithm compared to the optimal results from
\cite{Reinelt1991}. We emphasize that the CRH controller is not designed for
deterministic TSP problems so it is not expected to perform as well as highly
efficient TSP algorithms. Nonetheless, as a starting basis of comparison, we
note that the errors are relatively small, ranging from 7.8 to 23.8\%.
\begin{table}[ptb]
\caption{TSP benchmark instances comparison with the CRH controller algorithm}%
\vspace{1mm} \centering
\resizebox{0.85\columnwidth}{!}{
\begin{tabular}{|p{1cm}|p{1cm}|p{1cm}|p{1cm}|p{1cm}|p{1cm}|}
\hline
TSP \quad Instance & Optimal Tour Length & Two Step Lookahead & Three Step Lookahead & Limited Range Agent & Minimum Error ($\%$) \\\hline
att48 & 33522 & 38011 & 37492 & 41112 & 11.8 \\\hline
eil51 & 426 & 547 & 480 & 507 & 12.6 \\\hline
berlin52 & 7542 & 8713 & 8713 & 8137 & 7.8 \\\hline
st70 & 675 & 840 & 818 & 816 & 20.8 \\\hline
eil76 & 538 & 633 & 635 & 655 & 17.6 \\\hline
pr76 & 108159 & 146980 & 131678 & 146944 & 21.7 \\\hline
rat99 & 1211 & 1451 & 1470 & 1591 & 19.8 \\\hline
rd100 & 7910 & 9529 & 9123 & 9618 & 15.3 \\\hline
kroA100 & 21282 & 25871 & 24795 & 23782 & 11.7 \\\hline
kroB100 & 22141 & 28093 & 27415 & 28581 & 23.8 \\\hline
kroC100 & 20749 & 24603 & 25561 & 26171 & 18.5 \\\hline
\end{tabular}
}\label{TSPtable}
\end{table}

In an attempt to measure the sensitivity of the results of the new CRH
controller to partial mission information, we also tested cases where agents
have limited sensing range (see fifth column in table \ref{TSPtable}). In
these cases, the agent only senses a target if it is within its sensing range
which we have assumed to be $20\%$ of the maximum dimension of the mission
space. The results in most cases are comparable to the full-information cases.
The computation time for the limited range agents is about an order of
magnitude shorter than the other one. These results show the low sensitivity
of the CRH controller performance to non-local information for each agent.
This observation suggests that CRH controller is likely to provide good
performance in a distributed implementation or in cases where targets are not
known a priori and should be locally sensed by the agents.

{\bf Addressing Instabilities}: As already mentioned, the original CRH controller may give rise to
oscillatory trajectories and fail to complete a mission. This is illustrated
in Fig. \ref{OldCRH3target} for a simple mission with three linearly
discounted reward targets. In Fig. \ref{NewCRH3target}, it is shown that the
new CRH controller can easily determine the optimal path in this simple
case. \begin{figure}[th]
\centering
\begin{subfigure}[Original CRH Oscillation]{
\includegraphics[width=1.5in]{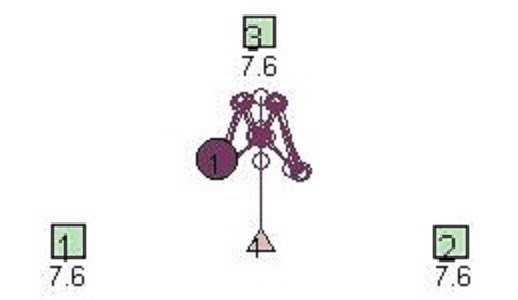}
\label{OldCRH3target}
}
\end{subfigure}
\begin{subfigure}[New CRH Optimal Solution]{
\includegraphics[width=1.5in]{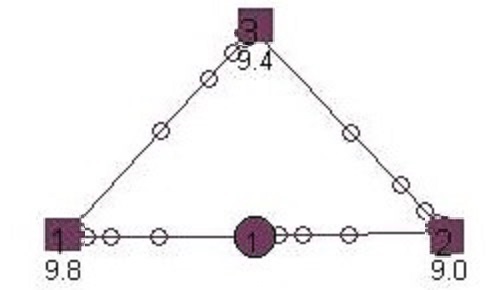}
\label{NewCRH3target}
}
\end{subfigure}
\caption{Comparison of the two CRH controllers for a 3 targets mission}%
\label{3targetcompare}%
\end{figure}

{\bf Comparison between original and new CRH Controller}:
A mission with 25 targets distributed uniformly and 2 agents starting at a
base is considered as shown in  Fig. \ref{mission25}, with uniformly
distributed initial rewards: $\lambda_{i}\sim
U(10,20)$ and $D_{i}\sim U(300,600)$ as in \eqref{reward}. In this case, the original CRH (Fig.
\ref{OldCRH25}) underperforms compared to 3-step and 5-step Lookahead
CRH controller (Figs. \ref{3StepCRH25}, \ref{5StepCRH25}) by a large
margin. We have used a value of $\gamma=0.3$ and $I=25$ in \eqref{zeta}. This
comes at the price of a slightly longer mission time in the 3-Step look ahead case, since the original
controller never reaches some targets before their rewards are lost. However,
minimizing time is not an objective of the MRCP considered here and reward
maximization dictates the final length of the mission.
\begin{figure}[th]
\centering
\begin{subfigure}[Complete Mission]
{ \includegraphics[width=1.5in]{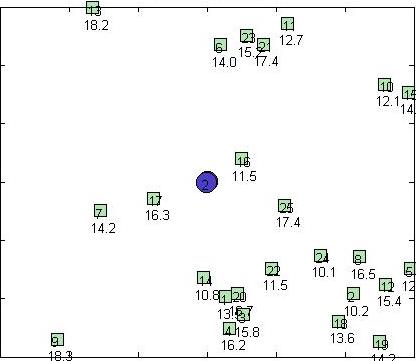}\label{mission25}}
\end{subfigure}
\begin{subfigure}[Original CRH, Reward=62.8, Time=714]
{ \includegraphics[width=1.5in]{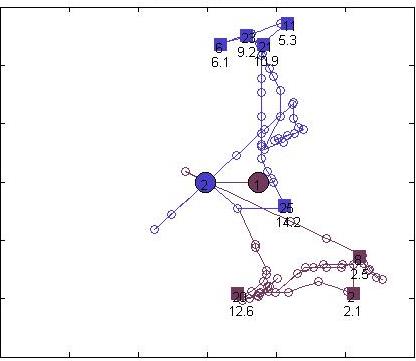}\label{OldCRH25}}
\end{subfigure}
\begin{subfigure}[3-Step Lookahead, Reward=141.29, Time=753]
{ \includegraphics[width=1.51in]{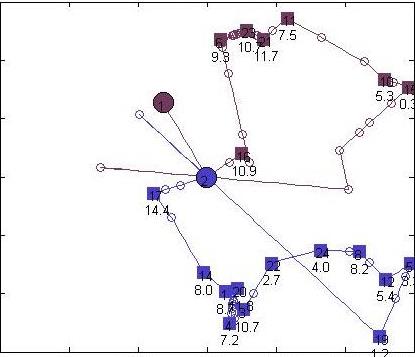}\label{3StepCRH25}}
\end{subfigure}
\begin{subfigure}[5-Step Lookahead, Reward=143.42, Time=657]
{ \includegraphics[width=1.51in]{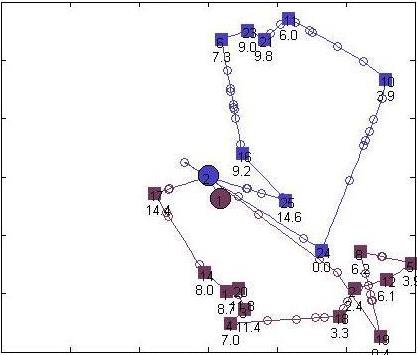}\label{5StepCRH25}}
\end{subfigure}
\caption{Performance comparison of the original and new CRH algorithms}%
\label{comparison2}%
\end{figure}

{\bf Randomly Generated Missions}: To compare the overall performance of the new CRH controller, we
generated 10 missions, each with 20 targets that are uniformly located
in a $300\times300$ mission space and two agents initially at the base. We
have used $\lambda_{i}\sim U(2,12)$ and $D_{i}=300$. The results are shown in
Table \ref{20targets} where we can see that the average total reward is
increased by $22\%$ while the average mission time is increased by $8\%$.
\begin{table}
\caption{20 Target-2 Agent Missions}%

\centering
\resizebox{0.8\columnwidth}{!}{
\begin{tabular}{|p{1.1cm}|p{1cm}|p{1.4cm}|p{1cm}|p{1.4cm}|}
\hline
\multirow{2}{*}{Mission $\#$} & \multicolumn{2}{|c|}{Original CRH} & \multicolumn{2}{|c|}{Three Step Lookahead CRH}\\\cline{2-5}
&Reward & Travel Time & Reward & Travel Time\\
\hline
1&33.92&412&45.24&536\\\hline
2&41.48&439&52.4&426\\\hline
3&30.93&476&41.19&483 \\\hline
4&32.08&389&37.24&457 \\\hline
5&41.5&444&47.25&537 \\\hline
6&44.61&389&47.91&471 \\\hline
7&23.93&528&35.48&462 \\\hline
8&38.68&415&50.91&489 \\\hline
9&30.92&478&34.08&429 \\\hline
10&36.81&458&44.26&476 \\\hline
\bf Average&35.48&443&43.53&479\\\hline
\end{tabular}}\label{20targets}
\end{table}
In another case 10 missions were generated, each with 20 targets where 10 targets are only initially available to the agents. The other 10 targets would randomly appear during the mission. We use an initial reward $\lambda_{i}\sim U(2,12)$ and the parameter $D_{i}\sim U(300,600)$. The comparison of the original and new CRH controller is shown in table \ref{randomtargettable}. An increase of $31\%$ is seen in the total reward with a slight $2\%$ increase in the total mission time.
\begin{table}[H]

\caption{20 Target-2 Agent Missions, With random target appearance}%
\centering
\resizebox{0.8\columnwidth}{!}{
\begin{tabular}{|p{1.1cm}|p{1cm}|p{1.4cm}|p{1cm}|p{1.4cm}|}
\hline
\multirow{2}{*}{Mission $\#$} & \multicolumn{2}{|c|}{Original CRH} & \multicolumn{2}{|c|}{Two Step Lookahead CRH}\\\cline{2-5}
&Reward & Travel Time & Reward & Travel Time\\
\hline
1&51.63&704&58.61&736\\\hline
2&46.53&716&67.57&632\\\hline
3&37.59&646&54.42&691\\\hline
4&31.71&929&53.13&941\\\hline
5&60.05&668&81.31&528\\\hline
6&58.16&609&67.91&688\\\hline
7&45.81&739&61.29&760\\\hline
8&49.71&722&59.47&732\\\hline
9&42.64&822&47.95&818\\\hline
10&40.32&648&58.01&868\\\hline
\bf Average&46.42&720&60.97&739\\\hline
\end{tabular}}\label{randomtargettable}\end{table}

{\bf Sparsity Factor in Clustered Missions}: We considered 8 random mission with 20 targets that are located uniformly in
one case and in 9 clusters in \ a second case. The goal here is to
investigate the contribution of the sparsity factor $\zeta_{i}$ in
\eqref{zeta}. We have again used $\lambda_{i}\sim U(2,12)$ and $D_{i}=300$. We consider a case with $\gamma=0$ which
eliminates the effect of $\zeta_{i}$ and a second case with $\gamma=0.3$ and $K=5$ in \eqref{zeta}. The results in table \ref{sparsitytable} indicate that in the clustered missions rewards are improved by about
$24\%$ whereas in the uniform cases the reward is unaffected on average.
\begin{table}[ptb]

\caption{Effect of the sparsity factor $\zeta_{i}$ in clustered missions}
\centering
\resizebox{0.8\columnwidth}{!}{
\begin{tabular}{|p{1.1cm}|p{1cm}|p{1.4cm}|p{1cm}|p{1.4cm}|}
\hline
\multirow{2}{*}{Mission $\#$} & \multicolumn{2}{|c|}{$\gamma=0$} & \multicolumn{2}{|c|}{$\gamma=0.3$}\\\cline{2-5}
&Reward & Travel Time & Reward & Travel Time\\
\hline
1  &     40.62    &      552   &      61.9    &      413 \\\hline
2  &     64.89    &      447   &     64.64    &      420 \\\hline
3  &     35.24    &      471   &      63.8    &      461\\\hline
4  &     63.78    &      465   &     64.64    &      478\\\hline
5  &     25.42    &      493   &      26.5    &      449\\\hline
6  &        22    &      454   &        22    &      454\\\hline
7  &      44.1    &      458   &     46.84    &      449\\\hline
8 &      34.26   &       466   &      61.21   &      472\\\hline
\bf Average & 41.29 &    475   &      51.44   &      449 \\\hline
\end{tabular}}\label{sparsitytable}\end{table}
\section{Conclusions and Future Work}
In this work a new CRH controller was developed for solving cooperative multi-agent problems in uncertain environments using the framework of the previous work in \cite{Li2006}. We overcame several limitations of the controller developed in \cite{Li2006}, including agent trajectory instabilities and inaccurate estimation of a reward-to-go function while improving the overall performance.  The event-driven CRH controller is developed to solve the MRCP, where multiple agents cooperate to maximize the total reward collected from a set of stationary targets in the mission space. The mission environment is uncertain, for example targets can appear at random times and agents might have a limited sensing range. The controller sequentially solves optimization problems over a planning horizon and executes the control for a shorter action horizon, where both are defined by certain events associated with new information becoming available. Unlike the earlier CRH controller, the feasible control set is finite instead of an infinite dimensional set. In the numerical comparisons, we showed that the new CRH controller has a better performance than the original one. In future work, the same framework will be applied to problems such as data harvesting where each target is generating data that should be collected and delivered to the base. Here the base will act as a target with dynamic reward. Also the new CRH controller can be extended into a decentralized version where each agent is responsible for calculating its own control.
\appendix
\textbf{Proof of Lemma \ref{MRCPlem1}}
\label{proof:MRCPlem1} From the definition of $\eta_{i}(x,t)$ in \eqref{eta}
and $\mathcal{C}_{l,j}(t_{k},H_{k})$ in \eqref{cp} we have:
\begin{equation}
\mathit{d}(\mathcal{C}_{l,j}(t_{k},H_{k}),\mathbf{y}_{l})\leq\mathit{d}%
(\mathbf{x},\mathbf{y}_{l}),\mbox{   }\forall\mathbf{x}\in\mathcal{F}%
_{j}(t_{k},H_{k})\label{dcp}%
\end{equation}
Dividing both sides by $\lambda_{l}D_{l}^{-1}$ and adding $\zeta_{l}%
(t_{k}+H_{k})$ we get, for all $\mathbf{x}\in\mathcal{F}_{j}(t_{k},H_{k})$,
\begin{equation}
\eta_{l}(\mathcal{C}_{l,j}(t_{k},H_{k}),t_{k}+H_{k})\leq\eta_{l}%
(\mathbf{x},t_{k}+H_{k})\label{subproof}%
\end{equation}
To prove the forward lemma statement, we use a contradiction argument and
assume there exists a target $r$ such that
\[
\eta_{l}(\mathcal{C}_{l,j}(t_{k},H_{k}),t_{k}+H_{k})>\eta_{r}(\mathcal{C}%
_{l,j}(t_{k},H_{k}),t_{k}+H_{k})
\]
Using \eqref{subproof}, we get $\eta_{r}(\mathcal{C}_{l,j}(t_{k},H_{k}%
),t_{k}+H_{k})<\eta_{l}(\mathbf{x},t_{k}+H_{k})$ for all $\mathbf{x}%
\in\mathcal{F}_{j}(t_{k},H_{k})$. This implies that there exists no
$\mathbf{x}\in\mathcal{F}_{j}(t_{k},H_{k})$ such that $l=\argmin_{i}\eta
_{i}(\mathbf{x},t_{k}+H_{k})$. Therefore, $l$ cannot be an active target,
which contradicts the assumption, hence \eqref{MRCPlemma1statement} is
true.\newline To prove the reverse statement, we assume that
\eqref{MRCPlemma1statement} holds for any $i\in\mathcal{T}_{k}$, i.e.,
\[
\eta_{l}(\mathcal{C}_{l,j}(t_{k},H_{k}),t_{k}+H_{k})<\eta_{i}(\mathcal{C}%
_{l,j}(t_{k},H_{k}),t_{k}+H_{k})
\]
By the definition of active targets (\ref{setpi}), we then know that $l$ is an
active target for agent $j$ at time $t_{k}$. \qed
\newline\textbf{Proof of Lemma
\ref{MRCPlem2}}
\label{proof:MRCPlem2} The active target set creates a partition of the set
$\mathcal{F}_{j}(t_{k},H_{k})$ where each subset is an arc in a Euclidean
mission space. For an active target $l\in S_{j}(t_{k},H_{k})$, let the $l$th
arc be ${\mathcal{F}}_{j}^{l}(t_{k},H_{k})\subset\mathcal{F}_{j}(t_{k},H_{k}%
)$. For each ${\mathcal{F}}_{j}^{l}(t_{k},H_{k})$, we prove that the heading
$\mathbf{v}^{\ast}=v(\mathcal{C}_{l,1}(t_{k},H_{k}))$ satisfies, for all
$\mathbf{x}\in{\mathcal{F}}_{j}^{l}(t_{k},H_{k})$:
\[%
\begin{split}
J_{\mathbf{I}}(\mathbf{v}^{\ast},t_{k},H_{k})+ &  J_{\mathbf{A}}%
(\mathbf{v}^{\ast},t_{k},H_{k})>\\
&  J_{\mathbf{I}}(v(\mathbf{x}),t_{k},H_{k})+J_{\mathbf{A}}(v(\mathbf{x}%
),t_{k},H_{k})
\end{split}
\]
There are two possible cases:\newline\emph{Case 1}: $\mathbf{y}_{l}%
\in\mathcal{F}_{1}(t_{k},H_{k})$. This means $d(\mathbf{y}_{l},\mathbf{x}%
_{1}(t))=H_{k}$. Also, from \eqref{cp}, this guarantees that $\forall
r\in\mathcal{T}_{k}$:
\[
q_{r}(\mathcal{C}_{r,1}(t_{k}+H_{k}))=\left\{
\begin{array}
[c]{rl}%
1 & \mbox{  if $r=l$}\\
0 & \text{otherwise}%
\end{array}
\right.
\]
Setting $\tilde{\tau}_{r}(\mathbf{v}^{\ast},t_{k},H_{k}))=\tilde{\tau}%
_{r}^{\ast}$, we have
\begin{align}
J(\mathbf{v}^{\ast},t_{k},H_{k})=&J_{\mathbf{I}}(\mathbf{v}^{\ast}%
,t_{k},H_{k})+J_{\mathbf{A}}(\mathbf{v}^{\ast},t_{k},H_{k})\nonumber\\
= &  \lambda_{l}\phi_{l}(t_{k}+H_{k})+\sum_{r=1}^{M_{k+1}}\lambda_{r}\phi
_{r}(\tilde{\tau}_{r}^{\ast})q_{l}(\mathbf{x}_{1}(\tilde{\tau}_{r}^{\ast
}))\nonumber
\end{align}
Here, $M_{k+1}=|\mathcal{T}_{k+1}|$ and $\mathcal{T}_{k+1}=\mathcal{T}%
_{k}-\{l\}$ since reward $l$ will be already collected at time $t_{k}+H_{k}$.
The estimated visit time $\tilde{\tau}_{r}^{\ast}$ is determined based on
a tour ${\boldsymbol{\theta}}$ that starts at point $\mathbf{y}_{l}$. Now let
us calculate the objective function for any other heading $v(\mathbf{x})$
where $\mathbf{x}\in{\mathcal{F}}_{1}^{l}(t_{k},H_{k})$. Setting $\tilde{\tau
}_{r}(v(\mathbf{x}),t_{k},H_{k}))=\tilde{\tau}_{r}$,
\begin{align}
J(v(\mathbf{x}),t_{k},H_{k})=&J_{\mathbf{I}}(v(\mathbf{x}),t_{k}%
,H_{k})+J_{\mathbf{A}}(v(\mathbf{x}),t_{k},H_{k})\nonumber\\
= &  0+\sum_{r=1}^{M_{k+1}^{\prime}}\lambda_{r}\phi_{r}(\tilde{\tau}_{r}%
)q_{l}(\mathbf{x}_{1}(\tilde{\tau}_{r}))\nonumber
\end{align}
since $\mathbf{x}\neq\mathcal{C}_{l,1}(t_{k},H_{k})$ so that $q_{r}%
(\mathbf{x})=0$ for all $r\in\mathcal{T}_{k}$. The aggregated tour is
determined over the set $\mathcal{T}_{k+1}^{\prime}=\mathcal{T}_{k}$ sarting
at $\mathbf{x}\in{\mathcal{F}}_{1}^{l}(t_{k},H_{k})$. By definition, the
target with the least travel cost from point $\mathbf{x}$ is the active target
$l$ and this is the first target in the tour. The rest of the tour consists of
targets in $\mathcal{T}_{k+1}-\{{l\}}$ starting at $\mathbf{y}_{l}$. Let us
call this tour ${\boldsymbol{\theta}}^{\prime}$. Since in both tours
${\boldsymbol{\theta}}$ and ${\boldsymbol{\theta}}^{\prime}$ the starting
point and the set of available targets are the same, the order of targets
will be identical and we have ${\boldsymbol{\theta}}^{\prime}%
=\{l,{\boldsymbol{\theta}}\}$. The visit times in ${\boldsymbol{\theta}}$
are given by
\[
\tilde{\tau}_{{\boldsymbol{\theta}}_n}^{\ast}=t_{k}+H_{k}+\sum
\limits_{i=1}^{n-1}\mathit{d}(\mathbf{y}_{{\boldsymbol{\theta}}_i}%
,\mathbf{y}_{{\boldsymbol{\theta}}_{i+1}})
\]
In ${\boldsymbol{\theta}}^{\prime}$, the visit time for target
${\boldsymbol{\theta}}^{\prime}_1=l$ is: $\tilde{\tau}_{{\boldsymbol{\theta}
}^{\prime}_1}=t_{k}+H_{k}+\mathit{d}(\mathbf{x},\mathbf{y}_{l})$. For the
rest of the targets, with $1<n\leq M_{k+1}^{\prime}$,
\[
\tilde{\tau}_{{\boldsymbol{\theta}}^{\prime}(n),1}=t_{k}+H_{k}+\mathit{d}%
(\mathbf{x},\mathbf{y}_{l})+\sum\limits_{i=1}^{n-1}\mathit{d}(\mathbf{y}%
_{{\boldsymbol{\theta}}^{\prime}_i},\mathbf{y}_{{\boldsymbol{\theta}}%
^{\prime}_{i+1}})
\]
For all $1<n\leq M_{k+1}$, we have ${\boldsymbol{\theta}}^{\prime
}_{n+1}={\boldsymbol{\theta}}_n$ and $\tilde{\tau}_{{\boldsymbol{\theta}%
}^{\prime}_{n+1}}>\tilde{\tau}_{{\boldsymbol{\theta}}_n}$. By assumption,
for all $i\in\mathcal{T}$, $\phi_{i}(t)$ is non-increasing, therefore $\phi_{{\boldsymbol{\theta}}^{\prime}_{n+1}}(\tilde{\tau
}_{{\boldsymbol{\theta}}^{\prime}_{n+1}})\leq\phi_{{\boldsymbol{\theta}}%
_{n}}(\tilde{\tau}_{{\boldsymbol{\theta}}_n})$, and it follows that
\[%
\begin{split}
&  \lambda_{l}\phi_{l}(t_{k}+H_{k}+\mathit{d}(\mathbf{x},\mathbf{y}_{l}%
))+\sum_{n=2}^{M_{k+1}^{\prime}}\lambda_{{\boldsymbol{\theta}}^{\prime}%
_n}\phi_{{\boldsymbol{\theta}}^{\prime}_n}(\tilde{\tau}_{{\boldsymbol{\theta
}}^{\prime}_n})\\
&  \leq\lambda_{l}\phi_{l}(t_{k}+H_{k})+\sum_{n=1}^{M_{k}+1}\lambda
_{{\boldsymbol{\theta}}_n}\phi_{{\boldsymbol{\theta}}_n}(\tilde{\tau
}_{{\boldsymbol{\theta}}_n})
\end{split}
\]
The right-hand-side above is $J(\mathbf{v}^{\ast},t_{k},H_{k})$ and the
left-hand-side is $J(v(\mathbf{x}),t_{k},H_{k})$, so we have proved that for
any $\mathbf{x}\in{\mathcal{F}}_{1}^{l}(t_{k},H_{k})$, $\mathbf{x}%
\neq\mathcal{C}_{l,1}(t_{k},H_{k})$ we have $J(v(\mathbf{x}),t_{k},H_{k})\leq
J(\mathbf{v}^{\ast},t_{k},H_{k})$.\newline\emph{Case 2}: $\mathbf{y}%
_{l}\not \in \mathcal{F}_{1}(t_{k},H_{k})$. In this case, for any point
$\mathbf{x}\in{\mathcal{F}}_{j}^{l}(t_{k},H_{k})$ we have a zero immediate
reward. Thus, only the rewards-to-go need to be compared. Using
\eqref{headingactive}, for any $\mathbf{x}\in{\mathcal{F}}_{j}^{l}(t_{k}%
,H_{k})$ we know the aggregation tour $\boldsymbol{\theta}$ for any point
$\mathbf{x}$ starts with target $l$ and the rest of it would also be the same.
Similarly, let us assume $\boldsymbol{\theta}$ is the tour for $\mathbf{v}%
^{\ast}$ and $\boldsymbol{\theta}^{\prime}$ is the tour for any other point
$\mathbf{x}$. The estimated visit times for $\boldsymbol{\theta}$ are:
\[
\tilde{\tau}_{{\boldsymbol{\theta}}_n}^{\ast}=t_{k}+H_{k}+\mathit{d}%
(\mathbf{y}_{l},\mathcal{C}_{l,1}(t_{k},H_{k}))+\sum\limits_{i=1}%
^{n-1}\mathit{d}(\mathbf{y}_{{\boldsymbol{\theta}}_i},\mathbf{y}%
_{{\boldsymbol{\theta}}_{i+1}})
\]
and for $\boldsymbol{\theta}^{\prime}$:
\[
\tilde{\tau}_{{\boldsymbol{\theta}}_n}^{\ast}=t_{k}+H_{k}+\mathit{d}%
(\mathbf{y}_{l},\mathbf{x})+\sum\limits_{i=1}^{n-1}\mathit{d}(\mathbf{y}%
_{{\boldsymbol{\theta}}_i},\mathbf{y}_{{\boldsymbol{\theta}}_{i+1}})
\]
By the definition in \eqref{cp}, $\mathcal{C}_{l,1}(t_{k},H_{k}))$ is on the
shortest path from $\mathbf{x}_{j}(t_{k})$ to $\mathbf{y}_{l}$, i.e.,
$\tilde{\tau}_{{\boldsymbol{\theta}}^{\prime}_n}>\tilde{\tau}%
_{{\boldsymbol{\theta}}_n}$. Again, with $\phi_{i}(t)$ being non-increasing
we have $\phi_{{\boldsymbol{\theta}}^{\prime}_n}(\tilde{\tau}%
_{{\boldsymbol{\theta}}^{\prime}_n})\leq\phi_{{\boldsymbol{\theta}}%
_n}(\tilde{\tau}_{{\boldsymbol{\theta}}_n})$, which implies
$J(v(\mathbf{x}),t_{k},H_{k})\leq J(\mathbf{v}^{\ast},t_{k},H_{k})$.

We have thus proved the lemma statement that the optimal heading of the agent
is one of the direct headings towards an active target. \qed
\newline\textbf{Proof
of Theorem \ref{MRCPtheorem1}} \label{Proof:MRCPtheorem1} In the multi-agent mission, calculating the immediate reward and reward-to-go in \eqref{JI} and \eqref{partitionJK} for each
agent is like a one-agent mission limited to its own target subset
$\mathcal{T}_{k,j}$. Therefore, the result follows directly from Lemma
\ref{MRCPlem2}. \qed
\newline\textbf{Proof of Theorem \ref{twotargettheorem}}
\label{proof:twotargettheorem} We assume WLOG that $\mathit{d}(\mathbf{x}%
,\mathbf{y}_{1})<\mathit{d}(\mathbf{x},\mathbf{y}_{2})$ so that at time
$t_{k}$ we have $H_{k}=\mathit{d}(\mathbf{x},\mathbf{y}_{1})$. This implies
that target 1 is always an active target (the travel cost of target 1 at time
$t_{k+1}=t_{k}+H_{k}$ is equal to 0). Recalling \eqref{cp} and setting $\mathcal{C}%
_{2,1}\mathcal{=}\mathcal{C}_{2,1}(t_{k},H_{k})$, we have $\mathit{d}%
(\mathbf{x},\mathbf{y}_{1})=\mathit{d}(\mathbf{x},\mathcal{C}_{2,1})=H_{k}$.
This results in:
\begin{equation}
\mathit{d}(\mathbf{x},\mathbf{y}_{2})=\mathit{d}(\mathbf{x},\mathbf{y}%
_{1})+\mathit{d}(\mathbf{y}_{2},\mathcal{C}_{2,1})\label{cpdifequation}%
\end{equation}
From Lemma \ref{MRCPlem1}, target 2 is an active target if and only if
$\eta_{2}(\mathcal{C}_{2,1},t_{k}+H_{k})\leq\eta_{1}(\mathcal{C}_{2,1}%
,t_{k+1})$. Therefore, from \eqref{eta}, target 2 is an active target if
and only if:
\[
\frac{\mathit{d}(\mathcal{C}_{2,1},\mathbf{y}_{2})}{\lambda_{2}D_{2}^{-1}}%
\leq\frac{\mathit{d}(\mathcal{C}_{2,1},\mathbf{y}_{1})}{\lambda_{1}D_{1}^{-1}}%
\]
which is rewritten as:
\[
\frac{\lambda_{1}}{D_{1}}{\mathit{d}(\mathcal{C}_{2,1},\mathbf{y}_{2})}%
\leq\frac{\lambda_{2}}{D_{2}}{\mathit{d}(\mathcal{C}_{2,1},\mathbf{y}_{1})}%
\]
We now consider two possible cases regarding target 2. First, assume target 2
is not an active target, i.e.,
\begin{equation}
\frac{\lambda_{1}}{D_{1}}{\mathit{d}(\mathcal{C}_{2,1},\mathbf{y}_{2})}%
>\frac{\lambda_{2}}{D_{2}}{\mathit{d}(\mathcal{C}_{2,1},\mathbf{y}_{1}%
)}\label{targe2notactive}%
\end{equation}
Starting with the trivial inequality:
\begin{equation}
0>\frac{-\lambda_{1}}{D_{1}}\big[\mathit{d}(\mathcal{C}_{2,1},\mathbf{y}%
_{2})+\mathit{d}(\mathcal{C}_{2,1},\mathbf{y}_{1})\big]\nonumber
\end{equation}
add $\frac{\lambda_{2}}{D_{2}}\big[\mathit{d}(\mathcal{C}_{2,1},\mathbf{y}%
_{1})\big]$ to both sides and use \eqref{targe2notactive} to get:
\[%
\begin{split}
&  \frac{\lambda_{1}}{D_{1}}\big[\mathit{d}(\mathcal{C}_{2,1},\mathbf{y}%
_{2})\big]>\frac{\lambda_{2}}{D_{2}}\big[\mathit{d}(\mathcal{C}_{2,1}%
,\mathbf{y}_{1})\big]>\\
&  \frac{-\lambda_{1}}{D_{1}}\big[\mathit{d}(\mathcal{C}_{2,1},\mathbf{y}%
_{2})\big]+(\frac{\lambda_{2}}{D_{2}}-\frac{\lambda_{1}}{D_{1}})\mathit{d}%
(\mathcal{C}_{2,1},\mathbf{y}_{1})\big]
\end{split}
\]
Adding the positive quantity of $\frac{\lambda_{2}}{D_{2}}\big[\mathit{d}%
(\mathcal{C}_{2,1},\mathbf{y}_{2})\big]$ to both sides and invoking the
triangle inequality:
\[%
\begin{split}
&  (\frac{\lambda_{1}}{D_{1}}+\frac{\lambda_{2}}{D_{2}})\big[\mathit{d}%
(\mathcal{C}_{2,1},\mathbf{y}_{2})\big]\\
&  >(\frac{\lambda_{2}}{D_{2}}-\frac{\lambda_{1}}{D_{1}})\big[\mathit{d}%
(\mathcal{C}_{2,1},\mathbf{y}_{2})\big]+(\frac{\lambda_{2}}{D_{2}}%
-\frac{\lambda_{1}}{D_{1}})\big[\mathit{d}(\mathcal{C}_{2,1},\mathbf{y}%
_{1})\big]\\
&>(\frac{\lambda_{2}}{D_{2}}-\frac{\lambda_{1}}{D_{1}})\mathit{d}%
(\mathbf{y}_{1},\mathbf{y}_{2})
\end{split}
\]
Rearranging the last inequality and using \eqref{cpdifequation} results in:
\begin{equation}%
\begin{split}
&  \frac{\lambda_{1}}{D_{1}}\big[\mathit{d}(\mathbf{x},\mathbf{y}%
_{2})+\mathit{d}(\mathbf{y}_{2},\mathbf{y}_{1})\big]+\frac{\lambda_{2}}{D_{2}%
}\mathit{d}(\mathbf{x},\mathbf{y}_{2})\\
&  >\frac{\lambda_{1}}{D_{1}}\mathit{d}(\mathbf{x},\mathbf{y}_{1}%
)+\frac{\lambda_{2}}{D_{2}}\big[\mathit{d}(\mathbf{x},\mathbf{y}%
_{1})+\mathit{d}(\mathbf{y}_{2},\mathbf{y}_{1})\big]
\end{split}
\label{finallemma3}%
\end{equation}
which  is the same as \eqref{Twotargetanalytic} implying that path $\theta
_{1}=(1,2)$\textbf{ }is optimal, i.e., the CRH controller finds the optimal
path.

Next, assume that target 2 is also an active target along with target 1. Let
$u_{1}$ and $u_{2}$ be the headings for target 1 and 2 respectively, i.e.,
$\mathbf{x}_{1}(t_{k+1},u_{1})=\mathbf{y}_{1}$ and $\mathbf{x}_{1}%
(t_{k+1},u_{2})=\mathcal{C}_{2,1}$, The objective function of the CRH
controller under $u_{1}$ and $u_{2}$ is:
\[%
\begin{split}
J(u_{1},t_{k},H_{k})&=J_{\mathbf{I}}(u_{1},t_{k},H_{k})+J_{\mathbf{A}}%
(u_{1},t_{k},H_{k})\\
&  =\lambda_{1}\phi_{1}(t_{k+1})+\lambda_{2}\phi_{2}(t_{k+1}+\mathit{d}(\mathbf{y}_{1},\mathbf{y}_{2}))
\end{split}
\]%
\[%
\begin{split}
J(u_{2},t_{k},H_{k})&=J_{\mathbf{I}}(u_{2},t_{k},H_{k})+J_{\mathbf{A}}%
(u_{2},t_{k},H_{k})\\
&=0+\big[\lambda_{2}\phi_{2}\big(t_{k+1}+\mathit{d}(\mathcal{C}%
_{2,1},\mathbf{y}_{2})\big)\\&+\lambda_{1}\phi_{1}\big(t_{k+1}  +\mathit{d}(\mathcal{C}_{2,1},\mathbf{y}_{2})+\mathit{d}(\mathbf{y}%
_{1},\mathbf{y}_{2})\big)
\end{split}
\]
Note that in order to evaluate the objective function for $u_{2}$ we find a
tour starting at point $\mathcal{C}_{2,1}$ which goes to the target with
minimum travel cost. However, for target 2 to be active at $t_{k}$ it has to
have the smallest travel cost at that point, which results in $J_{\mathbf{A}%
}(u_{2},t_{k},H_{k})$ to be the reward of going to target 2 and then target 1.
We can see that using the reward of each path from \eqref{R12} and \eqref{R21}
we can write:
\[
J(u_{1},t_{k},H_{k})=R_{(1,2)},\qquad J(u_{2},t_{k},H_{k})=R_{(2,1)}%
\]
Thus, the objective function of the CRH controller under $u_{1}$ and $u_{2}$
is identical to the corresponding path rewards. Hence, the CRH controller
selects the correct optimal heading at $t_{k}$. \qed
\bibliographystyle{ieeetr}

\end{document}